\documentclass[letterpaper,11pt]{article}

\usepackage[margin=1in]{geometry}                
\usepackage{amssymb}
\usepackage{setspace}
\usepackage{authblk}
\usepackage[title]{appendix}\usepackage{amsmath}%
\usepackage{amsfonts}%
\usepackage{amssymb}%
\usepackage{amsthm}
\usepackage{breqn}
\usepackage{graphicx}
\usepackage{colonequals}
\usepackage{psfrag}
\usepackage{subfigure}
\usepackage{mathrsfs}
\usepackage[noadjust]{cite}
\usepackage{hyperref}

\usepackage[usenames,dvipsnames]{xcolor}
\usepackage{dsfont}
\usepackage{amsbsy}

\allowdisplaybreaks

\newtheorem{thm}{Theorem}
\newtheorem{lem}{Lemma}

\newtheorem{prop}{Proposition}

\newtheorem{cor}{Corollary}

\theoremstyle{definition}

\newtheorem{defn}{Definition}
\newtheorem{scheme}{Scheme}
\newtheorem{remark}{Remark}
\newtheorem{example}{Example}

\newcommand{\etal}{\textit{et al.}~}
\newcommand{\Fq}{\mathbb{F}_{q}}
\newcommand{\PR}{\mbox{Pr}}

\newcommand{\calX}{\mathcal{X}}
\newcommand{\calA}{\mathcal{A}}
\newcommand{\calB}{\mathcal{B}}
\newcommand{\calK}{\mathcal{K}}
\newcommand{\calM}{\mathcal{M}}
\newcommand{\calY}{\mathcal{Y}}

\newcommand{\calJ}{\mathcal{J}}
\newcommand{\calC}{\mathcal{C}}
\newcommand{\calS}{\mathcal{S}}

\newcommand{\calL}{\mathcal{L}}

\newcommand{\Xn}{\mathcal{X}^n}
\newcommand{\cwd}{\left[2^{nR} \right]}
\newcommand{\floor}[1]{\lfloor #1 \rfloor}
\newcommand{\ceil}[1]{\lceil #1 \rceil}

\newcommand{\bc}{\mathbf{c}}

\newcommand{\bH}{\mathbf{H}}

\newcommand{\bD}{\mathbf{D}}
\newcommand{\bx}{\mathbf{x}}
\newcommand{\rank}{\mbox{rank}}
\newcommand{\bz}{\mathbf{z}}

\renewcommand{\tilde}{\widetilde}
\newcommand{\Enc}{\mathsf{Enc}}
\newcommand{\Dec}{\mathsf{Dec}}

\newcommand{\Reals}{\mathbb{R}}

\newcommand{\defined}{\triangleq}

\newcommand{\ExpVal}[2]{\mathbb{E}\left[ #2 \right]}

\newcommand{\by}{\mathbf{y}}

\newcommand{\blambda}{\pmb{\lambda}}

\newcommand{\bu}{\mathbf{u}}

\newcommand{\ba}{\mathbf{a}}
\newcommand{\bb}{\mathbf{b}}

\newcommand{\EE}[1]{\ExpVal{}{#1}}

\newcommand{\bsigma}{\boldsymbol\sigma}

\newcommand{\brho}{\boldsymbol{\rho}}
\newcommand{\suchthat}{\,\mid\,}
\newcommand{\indicator}{\mathds{1}}
\newcommand{\mmse}{\mathsf{mmse}}
\newcommand{\error}{\mathsf{e}}

\newcommand\independent{\protect\mathpalette{\protect\independenT}{\perp}}
\def\independenT#1#2{\mathrel{\rlap{$#1#2$}\mkern2mu{#1#2}}}

\usepackage{prettyref,enumerate}

\title{Hiding Symbols and Functions:  \\ New Metrics and Constructions for Information-Theoretic
Security}
\author{Flavio P. Calmon,  Muriel M\'edard, Mayank Varia, Ken R. 
  Duffy,\\ Mark M.
  Christiansen, Linda M. Zeger \thanks{Some of the results in this paper
  were presented at the 50th and 52nd Allerton Conference on
Communications, Control and Computing \cite{inproc:allertonCrypt,calmon2014allerton}.
\\
F.~P.~Calmon and M.~M\'edard are with the Research Laboratory
of Electronics at the Massachusetts Institute of Technology, Cambridge, MA (email:
flavio@mit.edu; medard@mit.edu).
\\
M.~Varia is with the MIT Lincoln Laboratory, Lexington, MA (e-mail:
mayank.varia@ll.mit.edu).
\\
 M.~M.~Christiansen and K.~R.~Duffy are with the Hamilton Institute, Maynooth
 University, Maynooth, Co Kildare, Ireland (e-mail:
mark.christiansen@nuim.ie; ken.duffy@nuim.ie).
\\
L. M. Zeger is currently with Auroral LLC, and was with the MIT Lincoln Laboratory, Lexington, MA, zeger@auroral.biz.
\\
F.~P.~Calmon and M.~Varia were sponsored by the Intelligence Advanced Research Projects
Activity under Air Force Contract FA8721-05-C-0002.  Opinions, interpretations,
conclusions and recommendations are those of the author and are not necessarily
endorsed by the United States Government.
}
}
\date{}                                           

\begin{document}

\maketitle
\vspace{-.2in}

\begin{abstract}
   
  We present  information-theoretic definitions and results for analyzing
  symmetric-key encryption schemes beyond the perfect secrecy regime, i.e. when
  perfect secrecy is not attained. We adopt two lines of analysis, one based on
  lossless source  coding,  and another akin to rate-distortion theory. We start
  by presenting a new information-theoretic metric for security, called
  $\epsilon$-symbol secrecy, and derive associated fundamental bounds. This
  metric provides a parameterization of secrecy that spans other
  information-theoretic metrics for security, such as weak secrecy and perfect
  secrecy.  We then introduce  list-source codes (LSCs), which are a general
  framework for mapping a key length (entropy) to a list size that an
  eavesdropper has to resolve in order to recover a secret message.  We provide
  explicit constructions of LSCs, and show that LSCs that achieve high symbol
  secrecy also achieve a favorable  tradeoff between key length and uncertainty
  list size. We also demonstrate that, when the source is uniformly distributed,
  the highest level of symbol secrecy for a fixed key length can be achieved
  through a construction based on  minimum-distance separable (MDS) codes.
  Using an analysis related to rate-distortion theory, we then show how symbol
  secrecy can be used  to determine the probability that an eavesdropper
  correctly reconstructs functions of the original plaintext. More specifically,
  we present lower bounds for the minimum-mean-squared-error of estimating a
  target function of the plaintext given that a certain set of functions of the
  plaintext is known to be hard (or easy) to infer, either by design of the
  security system or by restrictions imposed on the adversary. We illustrate how
  these bounds can be applied to characterize  security properties of
  symmetric-key encryption schemes, and, in particular, extend security claims
  based on symbol secrecy to a functional setting. Finally, we discuss the
  application of our methods in key distribution, storage and privacy.

\end{abstract}

\newpage
\tableofcontents

\onehalfspacing

\section{Introduction}

The security properties of a communication scheme can, in general, be evaluated
from two fundamental perspectives:  information theoretic  and computational.
For a noiseless setting, unconditional (i.e. perfect) information-theoretic
secrecy can only be attained when the communicating parties share a random key
with entropy at least as large as the message itself
\cite{shannon_communication_1949}. Consequently, usual information-theoretic
approaches focus on physically degraded models \cite{liang_information_2009},
where the goal is to maximize the secure communication \textit{rate} given that
the adversary has a noisier observation of the message than the legitimate
receiver.  On the other hand, computationally secure cryptosystems have thrived
both from a theoretical and a practical perspective.  Such systems are based on
yet unproven hardness assumptions, but nevertheless have led to cryptographic
schemes that are widely adopted (for an overview, see
\cite{katz_introduction_2007}).  Currently, computationally secure encryption
schemes  are used millions of times per day, in applications that range from
online banking transactions to digital rights management. 



Computationally secure cryptographic constructions do not necessarily provide an
information-theoretic guarantee of security. For example, one-way permutations
and public-key encryption cannot be deemed secure against an adversary with
unlimited computational resources. This is not to say that such primitives are
not secure in practice -- real-world adversaries are indeed computationally
bounded. There are, however, cryptographic schemes that are believed to be
computationally secure and simultaneously provide  \textit{some} security
guarantee against computationally unbounded adversaries, albeit such guarantee
is not absolute secrecy. This was noted by Shannon
\cite{shannon_communication_1949} and later by Hellman
\cite{hellman_extension_1977} in a companion paper to his and Diffie's work
``New directions in Cryptography'' \cite{diffie_new_1976}. 



Our goal in this work is to characterize the fundamental information-theoretic
security properties of cryptographic schemes when perfect secrecy is not
attained. We follow the  footsteps of Shannon and Hellman and study
symmetric-key encryption with small keys, i.e. when the length of the key is
smaller than the length of the message. In this case, the best a computationally
unrestricted adversary can do is to decrypt the ciphertext with all possible
keys, resulting in a list of possible plaintext messages. The adversary's
uncertainty regarding the original message is then represented by a probability
distribution over this list. This distribution, in turn, depends on both the distribution of
the key and the distribution of the plaintext messages. 

We evaluate the information-theoretic  security in this setting  through two
complementary  lines of analysis: (i) one  based on lossless source coding,
where the security properties of the uncertainty list are measured using mutual
information-based metrics and secure communication schemes are provided based on
 linear
code constructions, and (ii) another akin to rate-distortion theory, where the
mutual information-based metrics are translated into restrictions on the
inference capabilities of the adversary through converse results. We
describe each approach below.

\subsection{Lossless Source Coding Approach}

If perfect secrecy is not achieved, then meaningful metrics are required to
quantify the level of information-theoretic security provided by a cryptographic
scheme. We  define a new metric for characterizing security,
$\epsilon$-\textit{symbol secrecy}, which quantifies the uncertainty of specific
source symbols given an encrypted source sequence.  This metric subsumes
traditional rate-based information-theoretic measures of secrecy which are
generally asymptotic \cite{liang_information_2009}. However, our
definition is not asymptotic and, indeed, we provide a construction that
achieves fundamental symbol secrecy bounds, based on maximum distance separable
(MDS) codes, for finite-length sequences. We note that there has been a long
exploration of the connection between coding and cryptography
\cite{blahut_communications_1994}, and our work is inscribed in this school of
thought. 

We also introduce a general source coding framework for analyzing the
fundamental information-theoretic properties of symmetric-key encryption, called
\textit{list-source codes} (LSCs). LSCs compress a source sequence
\textit{below} its entropy rate and, consequently, a message encoded by an LSC
is decoded to a list  of possible source sequences instead of a unique source
sequence. We demonstrate how any symmetric-key encryption scheme can be cast as
an LSC, and prove that the best an adversary can do is to reduce the set of
possible messages to an exponentially sized list with certain properties, where
the size of the list depends on the length of the key and the distribution of
the source. Since the list has a size exponential in the key length, it cannot
be resolved in polynomial time in the key length, offering a certain level of
computational security. We characterize the achievable  $\epsilon$-symbol
secrecy of LSC-based encryption schemes, and provide explicit constructions
using  algebraic coding. 



\subsection{Rate-Distortion Approach}
While much of information-theoretic security has considered the hiding of the
plaintext, cryptographic metrics of security seek to hide also functions thereof
\cite{goldwasser_probabilistic_1984}. More specifically,  cryptographic metrics
characterize how well an adversary can (or cannot) infer functions of a hidden
variable, and are  stated in terms of lower bounds for average estimation error
probability. This contrasts with standard information-theoretic metrics of
security, which are concerned with the average number of bits that an adversary
learns about the plaintext. Nevertheless, as shown here, restrictions on the
average mutual information can be mapped to lower bounds on average estimation
error probability through  rate-distortion formulations. 

Using a rate-distortion
based approach, we extend the definition of
$\epsilon$-symbol secrecy in order to quantify not only the information that an
adversary gains about individual symbols of the source sequence, but also the
information gained about \textit{functions} of the encrypted source sequence. We
prove that ciphers with high symbol secrecy  guarantee that certain
functions of the plaintext are provably hidden regardless of computational
assumptions. In particular, we show that certain one-bit function of the
plaintext (i.e. predicates) cannot be reliably inferred by the adversary.


We illustrate the application of our results both for hiding the source data and
functions thereof. We  provide an extension of the  one-time
pad \cite{shannon_communication_1949}  to a functional setting, demonstrating how certain classes of functions of the
plaintext can be hidden using a short key. We also consider the privacy against statistical
inference setup studied in \cite{inproc:allertonPriv}, and show how the analysis
introduced here sheds light on the fundamental privacy-utility tradeoff. 

 From a  practical standpoint, we investigate the problem of secure content caching and
 distribution. We propose a hybrid encryption scheme based on list-source codes,
 where a large fraction of the message can be encoded and distributed using a
 key-independent list-source code. The information necessary to resolve the
 decoding list, which can be much smaller than the whole message, is then
 encrypted using a secure  method. This scheme allows a significant amount of
 content to be distributed and cached  \textit{before} dealing with key
 generation, distribution and management issues.

\subsection{Related work}

Shannon's seminal work \cite{shannon_communication_1949} introduced the use of
statistical and information-theoretic metrics for analyzing secrecy systems.
Shannon characterized several properties of conditional entropy (equivocation)
as a metric for security, and investigated the effect of the source distribution
on the security of a symmetric-key cipher.  Shannon also considered the
properties of ``random ciphers'', and showed that, for short keys and
sufficiently long, non-uniformly distributed messages, the random cipher is
(with high probability) breakable: only one message is very likely to have
produced a given ciphertext. Shannon defined the length of the message required
for a ciphertext to be uniquely produced by a given plaintext as the
\textit{unicity distance}.

Hellman extended Shannon's approach to cryptography
\cite{hellman_extension_1977} and proved that Shannon's random cipher model is
conservative: A randomly chosen cipher is likely to have small unicity distance,
but does not preclude the existence of other ciphers with essentially infinite
unicity distance (i.e. the plaintext cannot be uniquely determined from the
ciphertext). Indeed, Hellman argued that carefully designed ciphers that match
the statistics of the source can achieve high unicity distance. Ahlswede
\cite{ahlswede_remarks_1982} also extended Shannon's theory of secrecy systems
to the case where the exact source statistics are unknown.

The problem of quantifying not only an eavesdropper's uncertainty of the entire
message but of individual symbols of the message was studied by Lu in the
context of additive-like instantaneous block ciphers (ALIB)
\cite{lu_existence_1979,lu_random_1979,lu_secrecy_1979}. The results presented
here are more general since we do not restrict ourselves to ALIB ciphers. More
recently, the design of secrecy systems with distortion constraints on the
adversary's reconstruction was studied by Schieler and Cuff
\cite{schieler_rate-distortion_2014}. We adopt here an alternative approach,
quantifying the information an adversary gains on average about the individual
symbols of the message, and investigate which functions of the plaintext an
adversary can reconstruct. Our results and definitions also hold for the
finite-blocklength regime.

Tools from algebraic coding  have been widely used for constructing
secrecy schemes  \cite{blahut_communications_1994}. In addition, the notion of
providing security by exploiting the fact that the adversary has incomplete
access to information (in our case, the key) is also central to several secure
network coding schemes and wiretap models. Ozarow and Wyner
\cite{ozarow_wire-tap_1985} introduced the wiretap channel II, where an
adversary can observe a  set $k$  of his choice out of $n$ transmitted symbols,
and proved that there exists a code  that achieves perfect secrecy. A
generalized version of this model was investigated by Cai and Yeung in
\cite{cai_secure_2002}, where they introduce  the related problem of designing
an information-theoretically secure linear network code when an adversary can
observe a certain number of edges in the network. Their results were later
extended in
\cite{feldman_capacity_2004,mills_secure_2008,el_rouayheb_secure_2012,silva_universal_2011}.
A  practical approach was presented by Lima \etal in
\cite{lima_random_2007}. For a survey on the theory of secure network coding, we
refer the reader to \cite{cai_theory_2011}.

The list-source code framework introduced here is related to the wiretap channel
II in that a fraction of the source symbols is hidden from a possible adversary.
Oliveira \etal investigated in \cite{oliveira_trusted_2010} a related setting in
the context of data storage over untrusted networks that do not collude,
introducing a solution based on Vandermonde matrices. The MDS coding scheme
introduced in this paper is similar to \cite{oliveira_trusted_2010}, albeit the
framework developed here  is more general.

List decoding techniques for channel coding were first introduced by Elias
\cite{elias_list_1957} and Wozencraft \cite{wozencraft1958}, with subsequent
work by Shannon \etal \cite{shannon_lower_1967,shannon_lower_1967-1}  and Forney
\cite{forney_exponential_1968}. Later, algorithmic results for list
decoding of channel codes were discovered by Gurusuwami and Sudan
\cite{guruswami_list_2001}. We refer the reader to \cite{guruswami_list_2009}
for a survey of  list decoding results. List decoding has been
considered in the context of source coding in \cite{ali_source_2010}. The
approach is related to the one presented here, since we may view a secret key as
side information, but \cite{ali_source_2010} did not consider source coding and
list decoding together for the purposes of security. 

The use of rate-distortion formulations in security and privacy settings was
studied by Yamamoto \cite{yamamoto_rate-distortion_1997} and Reed \cite{reed_information_1973}. Information-theoretic approaches to privacy that
take distortion into account were also considered in \cite{inproc:allertonPriv,sarwate_rate-disortion_2014,rebollo-monedero_t-closeness-like_2010,sankar_utility-privacy_2013}. 

\subsection{Notation}
\label{sec:notation}

Throughout the paper capital letters (e.g. $X$ and $Y$) are used to denote
random variables, and calligraphic letters (e.g. $\calX$ and $\calY$) denote
sets. All the random variables in this paper have a discrete support set, and
the support set of the random variables $X$ and $Y$ are denoted by $\calX$ and
$\calY$, respectively. For a
positive integer $j,k,n$, $j\leq k$, $[n]\defined  \{1,\dots,n\}$,
$[j,k]\defined \{j,j+1,\dots,k\}$. Matrices are denoted in bold
capital letters (e.g. $\bH$) and vectors in bold lower-case letters (e.g.
$\mathbf{h}$). A sequence of $n$ random variables $X_1,\dots,X_n$ is denoted by
$X^n$. Furthermore, for $\calJ\subseteq [ n]$, $X^{\calJ}\defined\left(X_{i_1},\dots,X_{i_{|\calJ|}}\right)$ where
$i_k\in \calJ$ and $i_1<i_2<\dots<i_{|\calJ|}$. Equivalently, for a vector $\bx =
(x_1,\dots,x_n)$, $\bx^\calJ\defined
\left(x_{i_1},\dots,x_{i_{|\calJ|}}\right)$. For two vectors
$\bx,\bz\in\Reals^n$, we denote by  $\bx\leq\bz$ the set of inequalities $\{x_i\leq
z_i\}_{i=1}^n$. Furthermore, we denote by $\mathcal{I}_n(t)$ the
set of all subsets of $[n]$ of size $t$, i.e.  $\calJ\in \mathcal{I}_n(t)
\Leftrightarrow \calJ\subseteq [n]$ and $|\calJ| = t$.

All the logarithms in the paper are in base 2. We denote the binary entropy function as 
\[h_b(x)\defined-x\log  x-(1-x)\log (1-x).\]
The inverse of the binary entropy function is the mapping
$h_b^{-1}:[0,1]\to[0,1/2]$ where
\begin{align*}
  h_b^{-1}(h(x))=
    \begin{cases}
        x,&0\leq x \leq 1/2\\
        1-x, &\mbox{otherwise.}
    \end{cases}
\end{align*}
The set of all unit variance functions of a random variable $X$ with
distribution $p_X$ (denoted by $X\sim p_X$) is given by 
\begin{equation*}
  \calL_2(p_X) \defined \left\{ \phi:\calX\to \Reals \mbox{~such that~}
  \|\phi(X)\|_2= 1,~X\sim p_X
   \right\},
\end{equation*}
where $\|\phi(X)\|_2\defined \sqrt{\EE{\phi(X)^2}}$.

The operators $T_X$ and $T_Y$  denote conditional expectation and, in
particular, $(T_X\circ g)(x)= \EE{g(Y)|X=x}$ and $(T_Y\circ f)(y)=\EE{f(X)|Y=y}$,
respectively. For two random variables $X$ and $Y$, the minimum-mean-squared
error (MMSE) of estimating $X$ from an observation of $Y$ is given by
\begin{equation*} 
  \mmse(X|Y)\defined\min_{X\rightarrow Y\rightarrow \hat{X}}
  \EE{(X-\hat{X})^2}.  
\end{equation*}

\subsection{Communication and threat model} 
\label{sec:model}

A transmitter (Alice)  wishes to transmit confidentially to a legitimate receiver
(Bob) a sequence of length $n$ produced by a discrete  source $X$ with alphabet
$\mathcal{X}$ and probability distribution $p_{X}$. We assume that the
communication channel shared by Alice and Bob is noiseless, but is observed by a
passive, computationally unbounded eavesdropper (Eve). Both Alice and Bob
have access to a shared secret key $K$ drawn  from a discrete alphabet
$\mathcal{K}$, such that $H(K)< H(X^n)$, and encryption/decryption functions
$\Enc:\calX^n\times \mathcal{K} \rightarrow \mathcal{M}$ and
$\Dec:\mathcal{M}\times \mathcal{K}\rightarrow \calX^n$, where $\mathcal{M}$ is
the set  encrypted messages. Alice observes the source sequence $X^n$, and
transmits an encrypted message $M= \Enc(X^n, K)$. Bob then recovers $X^n$ by
decrypting the message using the key, producing $\hat{X}^n=\Dec(M,K)$. The
communication is successful if $\hat{X}^n=X^n$.  We consider that the encryption
is closed \cite[pg.  665]{shannon_communication_1949}, so $\Dec(c,k_1)\neq
\Dec(c,k_2)$  for $k_1,k_2\in \mathcal{K} $, $k_1\neq k_2$.  We assume Eve knows
the functions $\Enc$ and $\Dec$, but does not know the secret key, $K$. Eve's
goal is to gain knowledge about the original source sequence.  


\subsection{Organization of the paper}

\subsubsection{Symbol secrecy} 
We introduce the definitions of absolute and
$\epsilon$-symbol secrecy in Section \ref{sec:symbolsecrecy}. Symbol secrecy quantifies the uncertainty
that an eavesdropper has about individual symbols of the message.

\subsubsection{Encryption with key entropy smaller than the message entropy} 
We present the definition of list-source codes
(LSCs), together with fundamental bounds, in Section \ref{sec:LSC}.
Practical code constructions of LSCs are  introduced in Section
\ref{sec:code}. We then analyze the symbol secrecy properties of LSCs in Section
\ref{sec:LSC_SS}.

\subsubsection{A Rate-Distortion View of Symbol Secrecy} In Section
\ref{sec:beyond} we introduce
results for characterizing the information leakage of a security system in terms
of functions of the original source data. In particular, we derive converse
bounds for the minimum-mean-squared error (MMSE) of estimating a target function
of the plaintext given that  certain functions of the plaintext are known to
be hard (or easy) to infer. We illustrate the application of these bounds in a
generalization of the one-time pad. We also use these results to bound the
probability of error of estimating predicates of the plaintext given that a
certain level of symbol secrecy is achieved.

\subsubsection{Further applications and practical considerations} 
 Section \ref{sec:practical} presents further applications of our results to
 security and privacy, together with practical considerations of the
proposed secrecy framework. Finally, Section \ref{sec:conc} presents our concluding
remarks.

\section{Symbol Secrecy}
\label{sec:symbolsecrecy}

In this section we define $\epsilon$-symbol secrecy, an information-theoretic
metric for  quantifying the information leakage from  security schemes that do
not achieve perfect secrecy. Given a source sequence $X^n$ and a random variable
$Z$ dependent of $X^n$, $\epsilon$-symbol secrecy is  the largest fraction $t/n$
such that, given $Z$,  at most $ \epsilon$ bits can be learned \textit{on
average} from any $t$-symbol subsequence of $X^n$. We also prove an ancillary
lemma that bounds the average mutual information between $X^n$ and $Z$ in terms
of symbol secrecy.


\begin{defn}
   Let $X^n$ be a random variable with support $\calX^n$, and $Z$ be the information
   that leaks from a security system (e.g. the ciphertext).
   Denoting $X^\calJ=\{X_i\}_{i\in \calJ}$, we say that $p_{X^n,Z}$ achieves an
   $\epsilon$-symbol secrecy of $\mu_\epsilon(X^n|Z)$ if
 \begin{equation}
   \label{eq:def_ssecrecy}
   \mu_\epsilon(X^n|Z)\defined \max\left\{ \frac{t}{n}\,\middle|\,
   \frac{I(X^{\calJ};Z)}{|\calJ|}\leq \epsilon~~\forall \calJ\subseteq [n],
0< |\calJ|\leq t\right\}.
 \end{equation}
 In particular, the \textit{absolute symbol secrecy} of $X^n$ from $Y$ is given by
 \begin{equation}
   \mu_0(X^n|Z)\defined \max\left\{ \frac{t}{n}\,\middle|\,
   I(X^{\calJ};Z)= 0~~ \forall \calJ\subseteq [n],
 0<|\calJ|\leq t\right\}.
 \end{equation}

 We also define the dual function of
symbol-secrecy for $X^n$ and $Z$ as:
    \begin{equation}
      \label{eq:def_estar}
      \epsilon^*_t(X^n|Z)\defined \inf \left\{ \epsilon\geq 0
        \,\middle|\,\mu_{\epsilon}(X^n|Z)\geq t/n \right\}.
    \end{equation}
\end{defn}  

The next examples illustrate a few use cases of symbol secrecy.

\begin{example}
    Symbol secrecy encompasses other definitions of secrecy, such as weak secrecy
    \cite{wyner_wire-tap_1975}, strong secrecy \cite{maurer_information-theoretic_2000} and perfect secrecy. For example, given two
    sequences of random variables $X^n$ and $Z^n$, if
    $\mu_\epsilon(X^n|Z^n)\to 1$ for all $\epsilon>0$, then $\frac{I(X^n;Z^n)}{n}\to
    0$. The converse is not true, as demonstrated in Example \ref{example:weak}
    below. Furthermore, $I(X^n;Z^n)=0$ if and only if $\mu_0(X^n|Z^n)=1$.
    Finally, the reader can verify that $I(X^n;Z^n)\to 0 $ if and only if there
    exists a sequence $\epsilon_n=o(n)$ such that  $\mu_{\epsilon_n}(X^n|Z^n)\to 1$.
\end{example}

\begin{example}
  Consider the case where $\calX=\{0,1\}$, $X^n$ is uniformly drawn from
  $\calX^n$, and $Z$ is the result of sending $X^n$ through a discrete
  memoryless erasure channel with erasure probability $\alpha$. Then, for any
   $\calJ\subseteq [n]$, $\calJ\neq \varnothing$,
  \begin{equation*}
    \frac{I(X^\calJ;Z)}{|\calJ|} = (1-\alpha),
  \end{equation*}
  and, consequently, 
  \begin{equation*}
    \mu_\epsilon(X^n|Z)=
        \begin{cases}
          0,& \mbox{for } 0\leq \epsilon < 1-\alpha, \\
            1,& \epsilon\geq 1-\alpha.
        \end{cases}
  \end{equation*}
\end{example}

\begin{example}
  \label{example:weak}
    Now assume again that $X^n$ is a uniformly distributed sequence of $n$ bits, but
    now $Z=X_1$. This corresponds to the case where one bit of the message is
    always sent in the clear, and all the other bits are hidden. Then, for any
    $\calJ\subseteq[n]$ such that $\{1\}\in \calJ$,
    \begin{equation*}
        I(X^\calJ;Z) = 1,
    \end{equation*}
    and, for $0\leq \epsilon <1$,
    \begin{equation*}
           \mu_\epsilon(X^n|Z)=0. 
    \end{equation*}
    Consequently, a non-trivial symbol-secrecy cannot be achieved for
  $\epsilon<1$. In general, if a symbol $X_i$ is sent in the clear,
  then a non-trivial symbol secrecy cannot be achieved for $\epsilon<H(X_i)$.
  Note that $I(X^n;Z)/n\to 0 $, so weak secrecy is achieved.
\end{example}

\begin{example}
    We now illustrate how symbol secrecy does not necessarily capture the
    information that leaks about functions of $X^n$. We  address this issue
    in more detail in Section \ref{sec:beyond}. Still assuming that $X^n$ is a uniformly distributed sequence of $n$ bits,
    let $Y$ be the parity bit of $X^n$, i.e. $Z=\prod_{i=1}^n (-1)^{X_i}$. Then,
    for any $\calJ \subsetneq [n]$,
    \begin{equation*}
        I(X^\calJ;Z)=0,
    \end{equation*}
    and, for $0\leq \epsilon <1$,
    \begin{equation*}
        \mu_\epsilon(X^n|Z) = \frac{n-1}{n},
    \end{equation*}
    and, for $\epsilon\geq 1$, $\mu_\epsilon(X^n|Z)=1$.
\end{example}

The following lemma provides an upper bound for $I(X^n;Z)$ in terms of
$\mu_\epsilon(X^n|Z)$ when $X^n$ is the output of a discrete memoryless source.
\begin{lem}
  Let $X^n$ be the output of a discrete memoryless source $X$, and $Z$ a noisy
  observation of $X^n$. For any $\epsilon$ such that $0\leq
  \epsilon \leq H(X)$, if $\mu_\epsilon(X^n|Z)=u^*$, then  
\begin{align} 
  \label{eq:mueps_rel}
  \frac{1}{n}I(X^{n} ; Z) \leq H(X) - u^*(H(X)-\epsilon).
\end{align}
\end{lem}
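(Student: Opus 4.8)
The plan is to reduce everything to a lower bound on the conditional entropy $H(X^n\mid Z)$ and then exploit the i.i.d.\ structure of the source together with the definition of $\mu_\epsilon$. First I would use the discrete memoryless source hypothesis, which gives $H(X^n)=nH(X)$, hence
\begin{equation*}
  \frac1n I(X^n;Z) = H(X) - \frac1n H(X^n\mid Z).
\end{equation*}
So it suffices to show $H(X^n\mid Z)\ge n\,u^*\,(H(X)-\epsilon)$.

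Next I would unpack the definition of symbol secrecy: writing $u^* = t^*/n$, where $t^*$ is the largest integer for which $I(X^\calJ;Z)/|\calJ|\le\epsilon$ holds for every nonempty $\calJ$ with $|\calJ|\le t^*$, we in particular have, for \emph{any} fixed $\calJ$ with $|\calJ| = t^*$,
\begin{equation*}
  I(X^\calJ;Z)\le \epsilon\, t^*.
\end{equation*}
Fix one such $\calJ$. Because the source is memoryless, $H(X^\calJ) = t^*H(X)$, so
\begin{equation*}
  H(X^\calJ\mid Z) = H(X^\calJ) - I(X^\calJ;Z) \ge t^*H(X) - \epsilon t^* = t^*\bigl(H(X)-\epsilon\bigr),
\end{equation*}
and the assumption $\epsilon\le H(X)$ guarantees the right-hand side is nonnegative.

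Finally I would lift this from the subsequence $X^\calJ$ to the full sequence $X^n$. Since $X^\calJ$ is a deterministic function of $X^n$, the chain rule gives $H(X^n\mid Z) = H(X^\calJ\mid Z) + H(X^{\calJ^c}\mid X^\calJ, Z) \ge H(X^\calJ\mid Z) \ge t^*(H(X)-\epsilon)$. Dividing by $n$ and combining with the first display yields
\begin{equation*}
  \frac1n I(X^n;Z) = H(X) - \frac1n H(X^n\mid Z) \le H(X) - u^*\bigl(H(X)-\epsilon\bigr),
\end{equation*}
as claimed. I do not expect a genuine obstacle here; the only points requiring a little care are that $u^*n = t^*$ is automatically an integer (it is, by the definition of $\mu_\epsilon$ as a maximum over $t/n$), that the bound $I(X^\calJ;Z)\le\epsilon t^*$ is available precisely at $|\calJ|=t^*$ (the extreme case in the defining constraint), and that the memorylessness is used twice — once for $H(X^n)=nH(X)$ and once for $H(X^\calJ)=t^*H(X)$.
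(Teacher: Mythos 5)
Your proof is correct and is essentially the paper's argument in conditional-entropy notation: the paper applies the chain rule $I(X^n;Z)=I(X^\calJ;Z)+I(X^{\bar\calJ};Z\mid X^\calJ)$ and bounds the two terms by $t\epsilon$ and $(n-t)H(X)$, which is algebraically the same as your steps $I(X^\calJ;Z)\le \epsilon t^*$ and $H(X^n\mid Z)\ge H(X^\calJ\mid Z)$ combined with $H(X^n)=nH(X)$. No gaps.
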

\begin{proof}
  Let  $\mu_{\epsilon}(X^n|Z)=u^*\defined t/n$, $\calJ \in \mathcal{I}_n(t)$  and
  $\bar{\calJ}=[n]\backslash \calJ$. Then
\begin{align*}
  \frac{1}{n}I(X^{n} ; Z)
  &=\frac{1}{n}I(X^\calJ;Z)+\frac{1}{n}I(X^{\bar{\calJ}};Z|X^{\calJ})\\
  &\leq \frac{t}{n}\left(\epsilon+
  \frac{1}{t}I(X^{\bar{\calJ}} ; Z|X^{\calJ})   \right)\\
  &\leq u^*\epsilon +\frac{(n-t)}{n}H(X) \\
  &=  H(X) - u^*(H(X)-\epsilon),
\end{align*}
where the first inequality follows from the definition of symbol secrecy, and
the second inequality follows from the assumption that the source is discrete
and memoryless and, consequently, $I(X^{\bar{\calJ}} ;
Z|X^{\calJ})\leq H(X^{\bar{\calJ}}|X^{\calJ})=(n-t)H(X)$.
\end{proof}

The previous result implies that when $\mu_\epsilon(X^n|Z)$ is large,  only a
small amount of information about $X^n$ can be gained from $Z$ on average.
However, even if $I(X^n;Z)$ is large, as long as $\mu_\epsilon(X^n|Z)$ is
non-zero, the uncertainty about $X^n$ given $Z$ will be spread throughout the
individual symbols of the source sequence. This property is desirable for
symmetric-key encryption and, as we shall show in Section \ref{sec:beyond}, can
be extended to determine which functions of $X^n$ can or cannot be reliably
inferred from $Z$. Furthermore, in Section \ref{sec:LSC_SS} we introduce
explicit constructions for symmetric-key encryption schemes that achieve a
provable level of symbol secrecy using the list-source code framework introduced
next.

\section{LSCs}
\label{sec:LSC}

In this section we present the definition of LSCs and derive
fundamental bounds. We also demonstrate how any symmetric-key encryption scheme can
be mapped to a corresponding list-source code.

\subsection{Definition and Fundamental Limits}

We introduce the definition of list-source codes is given below. 
\begin{defn}
  A $(2^{nR},|\mathcal{X}|^{nL},n)$-LSC $(f_n,g_{n,L})$ consists of an encoding
  function $f_n:\mathcal{X}^n\mapsto\cwd$ and a list-decoding function
  $g_{n,L}:\cwd\mapsto \mathcal{P}(\Xn)\backslash \varnothing$, where
  $\mathcal{P}(\Xn)$ is the power set of $\Xn$ and $|g_{n,L}(w)|=
  |\mathcal{X}|^{nL}~\forall w\in\cwd$. The value $R$ is that \textit{rate} of
  the LSC, $L$ is the \textit{normalized list size}, and $
  |\mathcal{X}|^{nL}$ is the \textit{list size}.
\end{defn}

Note that $0\leq L \leq 1$. From an operational point of view, $L$ is a
parameter that determines the size of the decoded list. For example, $L=0$
corresponds to traditional lossless compression, i.e., each source sequence is
decoded to a unique sequence. Furthermore, $L=1$ represents the trivial case
when the decoded list corresponds to $\Xn$. 
 
 For a given LSC, an error is declared
when a string generated by a source is not contained in the corresponding
decoded list. The average error probability is given by
 \begin{equation}       
   \error(f_n,g_{n,L})\defined\PR(X^n\notin g_{n,L}(f_n(X^n))).
\end{equation}
\begin{defn} 
  For a given discrete memoryless source $X$, the rate list size pair $(R,L)$ is
  said to be \textit{achievable} if for every $\delta>0$, $0<\epsilon<1$ and
  sufficiently large $n$ there exists a sequence of
  $(2^{nR_n},|\mathcal{X}|^{nL_n},n)$-list-source codes
  $\{(f_n,g_{n,L_n})\}_{n=1}^\infty$ such that $R_n< R+\delta$, $|L_n- L|<
  \delta$ and $\error(f_n,g_{n,L_n})\leq \epsilon$. The \textit{rate list
  region} is the closure of all rate list pairs $(R,L)$.
\end{defn}

\begin{defn}
  The \textit{rate list function} $R(L)$ is the infimum of all rates $R$
such that $(R,L)$ is in the rate list region for a given normalized list size
$0\leq L \leq 1$.
\end{defn}

\begin{thm}
  \label{thm:ratelist}
For any discrete memoryless source X, the rate list function is given by
\begin{equation} 
\label{eq:ratelist}
  R(L)= H(X)-L\log|\mathcal{X}|~.
\end{equation}
\end{thm}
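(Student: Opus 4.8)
The plan is to establish the two matching bounds $R(L)\ge H(X)-L\log|\calX|$ (converse) and $R(L)\le H(X)-L\log|\calX|$ (achievability) by asymptotic-equipartition arguments. Both rest on the elementary counting fact that a message set of size $2^{nR}$ together with lists of size $|\calX|^{nL}$ can cover at most $2^{n(R+L\log|\calX|)}$ sequences, while this cover must (nearly) exhaust the typical set of size $\approx 2^{nH(X)}$. I will treat the nondegenerate regime $H(X)\ge L\log|\calX|$; when $H(X)<L\log|\calX|$ a single list already covers the typical set, so $(0,L)$ is achievable and the stated formula should be read as $\max\{0,H(X)-L\log|\calX|\}$ (note also that $R\ge 0$ is always forced, since the encoder's range must be nonempty).

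\emph{Converse.} Suppose $(R,L)$ is achievable. By a standard diagonalization over $\delta\downarrow 0$, $\epsilon\downarrow 0$ in the definition of achievability, there is a sequence of $(2^{nR_n},|\calX|^{nL_n},n)$-LSCs $(f_n,g_{n,L_n})$ along $n\to\infty$ with $\limsup_n R_n\le R$, $L_n\to L$, and $\error(f_n,g_{n,L_n})\to 0$. Set $S_n\defined\bigcup_{w}g_{n,L_n}(w)\subseteq\Xn$, the union over all messages $w$. Since there are at most $2^{nR_n}$ messages and each list has size $|\calX|^{nL_n}$, we have $|S_n|\le 2^{nR_n}|\calX|^{nL_n}$; and since $X^n\notin S_n$ forces a decoding error, $\PR(X^n\in S_n)\ge 1-\error(f_n,g_{n,L_n})$. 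Intersecting with the weakly typical set $\calA^{(n)}_\delta$, whose elements have probability at most $2^{-n(H(X)-\delta)}$ and which has probability at least $1-\delta$ for $n$ large, gives $|S_n|\ge |S_n\cap\calA^{(n)}_\delta|\ge \PR(X^n\in S_n\cap\calA^{(n)}_\delta)\,2^{n(H(X)-\delta)}\ge (1-\error(f_n,g_{n,L_n})-\delta)\,2^{n(H(X)-\delta)}$. Combining, taking $\tfrac1n\log$, and letting $n\to\infty$ yields $R+L\log|\calX|\ge H(X)-\delta$; since $\delta>0$ is arbitrary, $R\ge H(X)-L\log|\calX|$, hence $R(L)\ge H(X)-L\log|\calX|$.

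\emph{Achievability.} Fix $L\in[0,1]$ and any $R>H(X)-L\log|\calX|$, and choose $\delta>0$ with $H(X)+2\delta-L\log|\calX|<R$. For each large $n$ put $L_n\defined\ceil{nL}/n$, so $L\le L_n\le L+1/n$ and the list size $|\calX|^{nL_n}\ge 2^{nL\log|\calX|}$ is a power of $|\calX|$. Let $\calA^{(n)}_\delta$ be the typical set, with $|\calA^{(n)}_\delta|\le 2^{n(H(X)+\delta)}$ and $\PR(X^n\notin\calA^{(n)}_\delta)\to 0$. Let $N_n\defined\ceil{2^{n(H(X)+\delta)}/|\calX|^{nL_n}}$ and $R_n\defined\tfrac1n\log N_n$; then $N_n|\calX|^{nL_n}\ge|\calA^{(n)}_\delta|$ and $R_n\le H(X)+\delta-L\log|\calX|+o(1)<R$ for $n$ large. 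Partition $\calA^{(n)}_\delta$ into $N_n$ bins, each containing at most $|\calX|^{nL_n}$ typical sequences (possible since $N_n|\calX|^{nL_n}\ge|\calA^{(n)}_\delta|$). Define $f_n(x^n)$ to be the bin index of $x^n$ when $x^n$ is typical and $f_n(x^n)=1$ otherwise, and let $g_{n,L_n}(w)$ be the set of typical sequences in bin $w$, padded with arbitrary elements of $\Xn$ up to exactly $|\calX|^{nL_n}$ sequences. Every typical $x^n$ lies in $g_{n,L_n}(f_n(x^n))$, so $\error(f_n,g_{n,L_n})\le\PR(X^n\notin\calA^{(n)}_\delta)\to 0$; thus $(R,L)$ is achievable and $R(L)\le H(X)-L\log|\calX|$.

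\emph{Main obstacle.} The two typicality steps are routine; the part that requires care is the integer bookkeeping — choosing $L_n$ slightly above $L$ so that $|\calX|^{nL_n}$ is a genuine power of $|\calX|$ while still $|L_n-L|<\delta$, and choosing the number of bins $N_n$ so that $N_n|\calX|^{nL_n}$ barely covers the typical set yet $R_n\to H(X)-L\log|\calX|$ from above and the partition into lists of the prescribed \emph{exact} size is feasible. One should also separately dispose of the degenerate case $H(X)<L\log|\calX|$ noted above.
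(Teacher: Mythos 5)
Your proof is correct, and it splits naturally into a half that matches the paper and a half that does not. The converse is essentially the paper's argument: both of you bound the union of all decoded lists by $2^{nR_n}|\calX|^{nL_n}$ and then lower-bound the cardinality of that union because it carries probability at least $1-\error(f_n,g_{n,L_n})$; the paper invokes \cite[Lemma 2.14]{csiszar_information_2011} for the cardinality lower bound where you rederive it by intersecting with the weakly typical set, which is the same content. The achievability constructions genuinely differ. The paper splits $X^n$ into a prefix of length $nL_n$ and a suffix, losslessly compresses each part, transmits only the compressed suffix, and takes the list to be all sequences consistent with it; this forces a reparameterization in which the achieved normalized list size is $\tilde L = L(H(X)+\epsilon)/\log|\calX|$ rather than $L$, and one must then argue that $\tilde L$ sweeps out $[0,H(X)/\log|\calX|]$. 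Your scheme instead bins the typical set directly into $N_n\approx 2^{n(H(X)+\delta)}/|\calX|^{nL_n}$ bins of the prescribed exact list size and transmits the bin index. Your route is shorter, hits the target $L$ directly, and handles the exact-list-size requirement cleanly by padding; what it gives up is the operational flavor of the paper's construction, which foreshadows the ``trivial scheme'' of Section \ref{sec:trivial} and the syndrome-based codes of Scheme \ref{scheme:lin} (and which the paper then criticizes as insecure, motivating symbol secrecy). Your remark on the degenerate regime $H(X)<L\log|\calX|$ is also well taken: the stated formula goes negative there, and the paper only implicitly restricts to $L\le H(X)/\log|\calX|$ (as in Figure \ref{fig:conj}), so reading the right-hand side of \eqref{eq:ratelist} as $\left(H(X)-L\log|\calX|\right)^{+}$ is the correct interpretation.
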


\begin{figure}[!tb]
  \begin{center}
    \psfrag{1}[c][c]{$R$}
    \psfrag{2}[b][l][1][90]{$L$}
    \psfrag{3}[r][r]{\small$\displaystyle  \frac{H(X)}{\log|\mathcal{X}|}$}
    \psfrag{4}[c][c]{\small$0$}
    \psfrag{5}[c][c]{\small$0$}
    \psfrag{6}[c][c]{\small$H(X)$}
    \psfrag{7}[l][t]{Achievable}
    \psfrag{8}[c][c]{1}
    \includegraphics[scale=0.42]{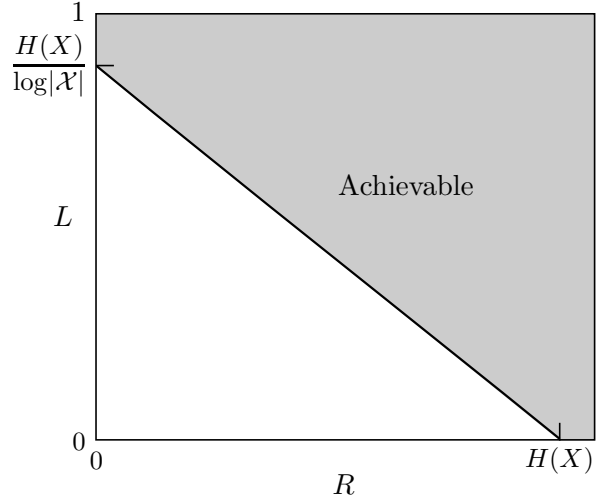}
  \end{center}
  \caption{Rate list region for normalized list size $L$ and code rate $R$.}
  \label{fig:conj}
\end{figure}

\begin{proof}
  Let $\delta>0$ be given and $\{(f_n,g_{n,L_n})\}_{n=1}^\infty$ be a sequence of codes with
(normalized) list size $L_n$ such
that  $L_n\rightarrow L$  and for any $0<\epsilon<1$ and $n$ sufficiently large
$0\leq \error(f_n,g_{n,L_n})\leq \epsilon$. Then 
\begin{align} 
  \PR \left(X^n\in \displaystyle \bigcup_{w\in \mathcal{W}^n} g_{n,L_n}(w) \right)
  &\geq \PR\left(X^n\in g_{n,L_n}(f_n(X^n))\right)\\
  &\geq 1-\epsilon
\end{align}
where $\mathcal{W}^n=[2^{nR_n}]$ and
$R_n$ is the rate of the code $(f_n,g_{n,L_n})$. There exists
$n_0(\delta,\epsilon,|\mathcal{X}|)$ where if  $n\geq
n_0(\delta,\epsilon,|\mathcal{X}|)$, then 
\begin{align}
  R_n+L_n\log |\mathcal{X}|&=\frac{1}{n}
  \log\left(  2^{nR_n}|\mathcal{X}|^{nL_n} \right) \nonumber\\
  &= \frac{1}{n} \log\left( \displaystyle \sum_{w\in \mathcal{W}^n}
  |g_{n,L_n}(w)|\right)\nonumber \\
  &\geq  \frac{1}{n} \log  \left| \bigcup_{w\in
    \mathcal{W}^n} g_{n,L_n}(w)  \right| \nonumber \\ 
  &\geq H(X)-\delta,
\end{align}
where the last inequality follows
from  \cite[Lemma 2.14]{csiszar_information_2011}. Since this holds for any
$\delta>0$, it follows that $R(L)\geq H(X)-L\log |\mathcal{X}|$ for all $n$ sufficiently large.

We prove achievability next. Let $0<L<1$ be given, and let $L_n\defined \lfloor
nL \rfloor$. Furthermore, let $X^n$ be a sequence of $n$ source symbols, and
denote $X^{nL_n}$ the first $nL_n$ source symbols and $X^{[nL_n+1,n]}$ the last
$n(1-L_n)$
source symbols where we assume, without loss of generality, that $nL$ is an
integer. Then, from standard source coding results \cite[pg.
552]{cover_elements_2006}, for any $\epsilon>0$ and $n$ sufficiently large, and
denoting $\alpha_{n} \defined \ceil{nL_n(H(X)+\epsilon)}/n$,  $\beta_{n} \defined
\ceil{n(1-L_n)(H(X)+\epsilon)}/n$, there
are (surjective) encoding functions 
\begin{equation*}
f^1_{nL}:\calX^{nL_n}\to
[2^{n\alpha_n}]
\mbox{ and } f^2_{n(1-L_n)}:\calX^{n(1-L_n)}\to [2^{n\beta_n}],
\end{equation*}
and
corresponding (injective) decoding functions
\begin{equation*}
g^1_{n,1}: [2^{n\alpha_n}]\to
\calX^{nL_n} \mbox{ and } g^2_{n,1}: [2^{n\beta_n}] \to \calX^{nL_n}
\end{equation*}
such that $\Pr(g^1_{n,1}(f^1_{nL_n}(X^{nL_n}))\neq X^{nL_n})\leq O(\epsilon)$ and
$\Pr(g^2_{n,1}( f^2_{n(1-L_n)}(X^{(1-L_n)n}))\neq X^{(1-L_n)n})\leq O(\epsilon)$. 

For $w\in [2^{n\beta_n}]$ and $\bx\in \calX^n$, let the list-source coding
and decoding functions be given by $f_{n}(\bx)\defined f^2_{n(1-L_n)}(\bx^{[nL_n+1,n]})$ and
$$g_{n,\tilde{L}_n}(w)\defined\{\bx\in \calX^n: \exists v\in [2^{n\alpha_n}] \mbox{ such
that } (f^1_{nL}(\bx^{[nL]}),f^2_{n(1-L)}(\bx^{[nL+1,n]}))=(v,w)  \},$$
respectively.  Then
  \begin{align*}
    \Pr\left( X^n\in g_{n,\tilde{L}_n}(f_n(X^n))\right) &\geq \Pr\left(g^1_{n,1}(f^1_{nL}(X^{Ln}))= X^{Ln}\wedge
    g^2_{n,1}(f^2_{n(1-L)}(X^{(1-L)n}))= X^{(1-L)n}\right)\\
    &\geq 1- O(\epsilon).
  \end{align*}

  Observe that the rate-list pair achieved by $(f_{n},g_{n,\tilde{L}_n}
  )$ is $(R_n,\tilde{L}_n)=(\beta_n, \alpha_n/\log|\calX|)) $. Consequently,
\begin{align*}
  R_n &\leq (1-L_n)(H(X)+\epsilon)+n^{-1} \\
  &\leq H(X)+\epsilon-\alpha_n \\
  &=H(X)+\epsilon-\tilde{L}_n\log|\calX|,
\end{align*}
  where the second inequality follows from $\alpha_n \leq
  L_n(H(X)+\epsilon)+n^{-1}$. Observe that $R_n\to n(1-L)H(X) +\epsilon\defined
  R$. Since $\tilde{L}_n\rightarrow
  L(H(X)+\epsilon)/\log|\calX|\defined \tilde{L}$
  as $n\rightarrow \infty$,  by choosing $n$ sufficiently large the rate-list pair $(R,\tilde{L})$ can be
  achieved, where $R$ and $\tilde{L}$ satisfy $$R \leq
  H(X)+\epsilon-\tilde{L}\log|\calX|.$$ Since $\epsilon$ is arbitrary and
  $\tilde{L}$ can span
  any value in $[0,H(X)/\log|\calX|]$, it follows that $R(L)\leq H(X) -
  L\log|\calX|$.
\end{proof}

\subsection{Symmetric-Key Ciphers as LSCs}

Let $(\Enc,\Dec)$ be a symmetric-key cipher where, without loss of generality,
$\calM=[2^{nR}]$ and $\Enc:\calX^n\times \calK\to \calM$ and $\Dec:\calM\times
\calK\to \calX^n$. Then an LSC can be designed based on this cipher by choosing
$k'$ from $\mathcal{K}$ and setting the encoding function
$f_n(\bx)=\Enc(\bx,k')$, where $\bx\in \calX^n$, and \[g_{n,L}(f_n(\bx))=
\{\mathbf{z}\in \calX^n:\exists k\in\mathcal{K} \mbox{ such that }
\Enc(\mathbf{z},k)=f_n(\bx)\},\] where $L$ satisfies $|\calK|=|\calX|^{nL}$. If
the key is chosen uniformly  from $\mathcal{K}$ then the decoded list
corresponds set of possible source sequences that could have generated the
ciphertext. The adversary's uncertainty will depend on the distribution of the
source sequence $X^n$.

Alternatively,  symmetric-key ciphers can also be constructed based on an
$(2^{nR},|\calX|^{nL},n)$-list-source code. Let $(f_n,g_{n,L})$ be the corresponding
encoding/decoding function of the LSC, and assume that the key  is drawn
uniformly from $\mathcal{K}=[|\calX|^{nL}]$, where the normalized list size
$L$ determines the length of the key.  Without loss of generality, we also
assume that  Alice and Bob agree on an ordering of $\calX$ and, consequently,
$\calX^n$ can be ordered using the corresponding dictionary ordering. We denote
$\mathsf{pos}(\bx)$ the position of the source sequence $\bx\in \calX$ in the
corresponding list $g_{n,L}(f_n(\bx))$, where $\mathsf{pos}:\calX^n\rightarrow
[|\calX|^{nL}]$.

The cipher can then be constructed by letting the message set be
$\calM'=[2^{nR}]\times[|\calX|^{nL}]$ and, for $\bx\in \calX^n$ and $k\in
  \mathcal{K}$,
\[\Enc(\bx,k)=(f_n(\bx),(\mathsf{pos}(\bx)+k)\mod|\mathcal{K}|).\] For
$(a,b)\in \calM'$, the decryption function is given by
$$\Dec((a,b ),k)=\{\bx:f_n(\bx)=a,\mathsf{pos}(\bx)=(b-k)\mod|\mathcal{K}|\}.$$ In
this case, an eavesdropper that does not know the key $k$ cannot recover the
function $\mathsf{pos}(\bx)$ and, consequently, her uncertainty
will correspond to the list $g_{n,L}(f_n(\bx))$.

\section{LSC design}
\label{sec:code}

In this section we discuss how to construct LSCs that achieve the
 rate-list tradeoff  \eqref{eq:ratelist} in the finite block length
regime. As shown below, an LSC that achieves good rate-list tradeoff does not
necessarily lead to good symmetric-key encryption schemes. This naturally 
motivates the constructions of LSCs that achieve high symbol secrecy.

\subsection{Necessity for code design}
\label{sec:trivial}
Assume that the source $X$ is
uniformly distributed in $\mathbb{F}_q$, i.e., $\PR(X=x)=1/q~\forall x \in
\mathbb{F}_q$. In this case $R(L)=(1-L)\log q$. A trivial scheme for achieving
the list-source boundary is the following. Consider a source sequence
$X^n=(X^p,X^s)$, where $X^p$ denotes the first $p=n-\floor{Ln}$ symbols of $X^n$
and $X^s$ denotes the last $s=\floor{Ln}$ symbols. Encoding is done by
discarding $X^s$, and mapping the prefix  $X^p$ to a binary codeword $Y^{nR}$ of
length $nR=\ceil{n-\floor{Ln}\log q}$ bits. This encoding procedure is similar
to the achievability scheme used in the proof of Theorem \ref{thm:ratelist}.

For decoding, the codeword $Y^{nR}$ is mapped to $X^p$, and the scheme outputs a
list of size $q^s$ composed by $X^p$ concatenated with all possible combinations
of  suffixes of length $s$. Clearly, for $n$ sufficiently large, $R\approx
(1-L)\log q$, and we achieve the optimal list-source size tradeoff. 

The previous scheme is  inadequate for security purposes. An adversary that
observes the  codeword $Y^{nR}$ can uniquely identify the first $p$
symbols of the source message, and the uncertainty is  concentrated over the
last $s$ symbols. Assuming that all source symbols are of equal
importance, we should  spread the uncertainty over all symbols of the message.
Given the encoding $f(X^n)$, a sensible  security scheme would
provide $I(X_i;f(X^n)) \leq \epsilon\ll \log q$ for $1\leq
i \leq n$. We can naturally extend this notion for groups of symbols
or functions over input symbols, which is  what symbol secrecy captures.

\subsection{A construction based on linear codes}

 Let $X$ be an i.i.d. source with support $ \calX$ and entropy $H(X)$, and
 $(s_n,r_n)$ a   source code for $X$ with  encoder $s_n:\calX^n \rightarrow \Fq^{m_n}$
 and decoder  $r_n:\Fq^{m_n}\rightarrow \calX^n$. Furthermore, let
 $\mathcal{C}$ be a $(m_n,k_n,d)$ linear code\footnote{For an overview of linear
   codes an related terminology, we refer the reader to
   \cite{roth_introduction_2006}.} over $\Fq$ with an
 $(m_n-k_n)\times m_n$ parity check matrix $\mathbf{H}_n$ (i.e. $\bc \in
 \mathcal{C} \Leftrightarrow \bH_n \bc =0 $).  Consider the following scheme,
 where  we assume
 \begin{equation*}
    k_n \defined n L_n \log |\calX|/\log q
  \end{equation*}
 is an integer,  $0\leq L_n \leq 1$ and $L_n \rightarrow L$ as
 $n\rightarrow \infty$. 

\begin{scheme}
\label{scheme:lin}
\textit{Encoding}: Let $\bx_n\in \calX^n$ be an $n$-symbol  sequence generated by the source.  Compute
the syndrome $\bsigma_n$ through the matrix multiplication \[\bsigma_n\defined\bH_n s_n(\bx_n)\] and map  each syndrome to a
distinct  sequence of $nR=\ceil{(m_n-k_n)\log q}$ bits, denoted by
$\by_{nR}$.

\textit{Decoding}: Map the binary codeword  $\by_{nR}$ to the corresponding
syndrome $\bsigma_n$. Output the list
\begin{equation*}
  g_{n,L_n}(\bsigma_n) =\left\{ r_n(\bz)\middle| \bz\in\Fq^{m_n},~ \bsigma_n=\bH_n\bz
\right\}. 
\end{equation*}
\end{scheme}

\begin{thm}
\label{thm:achieve}
If a sequence of source codes $\{(s_n,r_n)\}_{n=1}^\infty$ is asymptotically optimal for source $X$, i.e. $
m_n/n\rightarrow H(X)/\log q$ with vanishing error probability, scheme \ref{scheme:lin} achieves the 
rate list function $R(L)$ for source $X$.
\end{thm}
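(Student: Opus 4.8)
The plan is to show directly that, for each $n$, Scheme~\ref{scheme:lin} produces a $(2^{nR_n},|\calX|^{nL_n},n)$-LSC whose parameters converge to $(R(L),L)$ and whose average error probability vanishes; since the matching lower bound $R(L)\ge H(X)-L\log|\calX|$ was already established in Theorem~\ref{thm:ratelist}, achievability of this single scheme finishes the proof. I would check three ingredients in turn: the list size, the error probability, and the rate.

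For the list size, the defining property $\bc\in\mathcal{C}\Leftrightarrow\bH_n\bc=0$ says $\ker\bH_n=\mathcal{C}$, so for any attainable syndrome $\bsigma_n$ the solution set $\{\bz\in\Fq^{m_n}:\bH_n\bz=\bsigma_n\}$ is a coset of $\mathcal{C}$ and therefore has exactly $q^{k_n}$ elements. Applying the decoder $r_n$ can only collapse elements, so $|g_{n,L_n}(\bsigma_n)|\le q^{k_n}=|\calX|^{nL_n}$ by the choice $k_n=nL_n\log|\calX|/\log q$; padding the decoded list with arbitrary extra sequences (which cannot create errors) makes it exactly $|\calX|^{nL_n}$, so the normalized list size is $L_n$, and $L_n\to L$ by hypothesis. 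For the error probability, substituting $\bz=s_n(\bx_n)$ into the definition of $g_{n,L_n}$ shows $r_n(s_n(\bx_n))\in g_{n,L_n}(\bH_n s_n(\bx_n))$, so the LSC declares an error only when the underlying source code errs; hence $\error(f_n,g_{n,L_n})\le\PR(r_n(s_n(X^n))\ne X^n)\to 0$ by asymptotic optimality of $\{(s_n,r_n)\}$. For the rate, there are at most $q^{m_n-k_n}$ distinct syndromes, so $nR_n=\lceil(m_n-k_n)\log q\rceil$ and $R_n\le \frac{m_n}{n}\log q-L_n\log|\calX|+\frac1n\to H(X)-L\log|\calX|=R(L)$, using $m_n/n\to H(X)/\log q$ and $L_n\to L$.

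Assembling these three facts: given $\delta>0$ and $0<\epsilon<1$, for all sufficiently large $n$ the code has rate below $R(L)+\delta$, normalized list size within $\delta$ of $L$, and error at most $\epsilon$, so $(R(L),L)$ is achievable via Scheme~\ref{scheme:lin}; together with the converse in Theorem~\ref{thm:ratelist}, the scheme achieves the rate-list function $R(L)$. The only delicate point is bookkeeping rather than a genuine obstacle: one must handle the possible non-injectivity of $r_n$ (by padding the list, as above) and note that the construction tacitly requires $k_n\le m_n$, i.e.\ $L\le H(X)/\log|\calX|$, which is exactly the regime where $R(L)\ge 0$. It is worth stressing that the minimum distance $d$ of $\mathcal{C}$ plays no role in this argument --- any linear code of dimension $k_n$ over $\Fq$ will do --- which is precisely why the symbol-secrecy analysis in Section~\ref{sec:LSC_SS} must impose extra structure on $\mathcal{C}$.
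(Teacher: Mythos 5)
Your proof follows the same route as the paper's own: identify the decoded list with a coset of $\mathcal{C}$ to get the normalized list size $L_n=(k_n\log q)/(n\log|\calX|)$, then compute the rate $R_n=\lceil (m_n-k_n)\log q\rceil/n\to H(X)-L\log|\calX|$ using $m_n/n\to H(X)/\log q$. Yours is in fact the more complete version: the paper's proof asserts $|g_{n,L_n}(\bsigma_n)|=q^{k_n}$ without addressing the possible non-injectivity of $r_n$, and never explicitly verifies that the error probability of the resulting LSC vanishes, both of which you handle correctly (by padding the list, and by reducing the LSC error event to the source-code error event).
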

\begin{proof} 
Since the cardinality of each coset corresponding to a syndrome  $\bsigma_n$ is
exactly \[| g_{n,L_n}(\bsigma_n)|=q^{k_n},\] the normalized list size is
\[L_n=\log_{|\calX|}q^{k_n}=(k_n\log q)/(n\log |\calX|). \] By assumption,
$L_n\to L$ as $n\to \infty$. Denoting $ m_n/n= H(X)/\log
q+\delta_n$, where $\delta_n\rightarrow 0$ since the source code is assumed to
be asymptotically optimal, it follows that the rate of the LSC is   
\begin{align*}
R_n &=\ceil{(m_n-k_n)\log q}/n\\
&=\ceil{(H(X)+\delta_n\log q)n-L_n n \log|\calX|}/n\\
      &\to H(X)-L\log |\calX|,
\end{align*}
which is arbitrarily close to the rate in \eqref{eq:ratelist} for sufficiently
large $n$.
\end{proof}


The source coding scheme used in the proof of Theorem \ref{thm:achieve} can be
any asymptotically optimal scheme. Note that if the source $X$ is uniformly
distributed in $\Fq$, then  $L_n=k_n/n$ and
any message in the coset  indexed by $\bsigma_n$  is equally
likely. Hence, $R_n=(n-k)\log q/n=H(X)-L\log q$, which matches the upper bound
in  \eqref{eq:ratelist}. Scheme \ref{scheme:lin} provides a
constructive way of hiding information, and we can take advantage of the
properties of the underlying linear code to make precise assertions regarding
the security of the scheme.

With the syndrome in hand, how can we recover the rest of the message? One
possible approach is to find a $k_n \times n$ matrix $\bD_n$ that has full rank
such that the rows of $\bD_n$ and $\bH_n$ form a basis of $\Fq^{m_n}$. Such a
matrix can be easily found, for example, using the Gram-Schmidt process with the
rows of $\bH_n$ as a starting point. Then, for a source sequence $\bx_n$, we
simply calculate $\mathbf{t}_n=\bD_n \bx_n$ and forward $\mathbf{t}_n$ to the
receiver through a secure channel. The receiver can then invert the system
\begin{equation}
\left( 
\begin{array}{c}
\bH_n\\
\bD_n
\end{array}
 \right)
 \bx_n=\\
 \left(\begin{array}{c}
\bsigma_n\\
\mathbf{t}_n
\end{array}\right),
\end{equation}
and recover the original sequence $\bx_n$. This property allows list-source codes
to be deployed in practice using well known linear code constructions, such as
Reed-Solomon \cite[Chap. 5]{roth_introduction_2006} or Random Linear Network
Codes \cite[Chap. 2]{ho_network_2008}.

\begin{remark}
This approach is valid for general linear spaces, and holds for any pair of full
rank matrices $\bH_n$ and $\bD_n$ with dimensions $(m_n-k_n)\times m_n$ and
$k_n\times m_n$,
respectively, such that $\rank ([\bH_n^T~\bD_n^T]^T)=m_n$. However, here we adopt the
nomenclature of linear codes since we  make use of known code constructions to
construct LSCs with provable symbol secrecy properties in the next section.
\end{remark}

\begin{remark}
  The LSC described in scheme \ref{scheme:lin} can be combined with other
  encryption methods, providing, for example, an additional layer of security in
   probabilistic encryption schemes
  (\cite{katz_introduction_2007,goldwasser_probabilistic_1984}). A more detailed
  discussion of practical applications is presented in Section
  \ref{sec:practical}.
\end{remark}



\section{Symbol Secrecy of LSCs}
\label{sec:LSC_SS}
We  next present fundamental bounds for the amount of symbol secrecy achievable by
any LSC considering a discrete memoryless source. Since any encryption scheme can
be cast as an LSC, these results quantify the amount of symbol secrecy
achievable 
by any symmetric-key encryption scheme that encrypts a discrete memoryless
source.

\begin{lem}
  Let $\{(f_n,g_n) \}_{n=1}^\infty$ be a sequence of list-source codes that achieves a rate-list pair
$(R,L)$ and an $\epsilon$-symbol secrecy of
$\mu_{\epsilon}\left(X^n|Y^{nR_n}\right)\to\mu_\epsilon$ as $n\to \infty$. Then $0\leq \mu_\epsilon \leq
\min \left\{ \frac{L  \log |\calX|}{H(X)-\epsilon},1\right\}$.
\label{prop:epsilon_bound}
\end{lem}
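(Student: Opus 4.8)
The plan is to sandwich $\mu_\epsilon$ between two inequalities for the normalized mutual information $\tfrac1n I(X^n;Y^{nR_n})$: on one side the ancillary bound \eqref{eq:mueps_rel}, which reads $\tfrac1n I(X^n;Y^{nR_n}) \le H(X) - \mu_\epsilon(X^n|Y^{nR_n})(H(X)-\epsilon)$, and on the other a Fano-type lower bound coming from the list-decodability of the codes. Throughout I would assume $0\le\epsilon<H(X)$; if $\epsilon\ge H(X)$ the bound $L\log|\calX|/(H(X)-\epsilon)$ is vacuous (or $\ge 1$) and nothing is claimed beyond the trivial estimate. That trivial estimate is immediate: in \eqref{eq:def_ssecrecy} the maximizing $t$ satisfies $0\le t\le n$, so $0\le\mu_\epsilon(X^n|Y^{nR_n})\le 1$ for every $n$, whence $0\le\mu_\epsilon\le 1$; this already gives the ``$\min$ with $1$'' part of the statement.

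Next I would establish the lower bound on the mutual information. Put $\epsilon_n\defined\error(f_n,g_{n,L_n})$, so $\epsilon_n\to 0$, and let $E$ be the indicator of the list-decoding error event $\{X^n\notin g_{n,L_n}(Y^{nR_n})\}$. Since $Y^{nR_n}$ is a deterministic function of $X^n$ and the source is memoryless, $I(X^n;Y^{nR_n}) = nH(X) - H(X^n|Y^{nR_n})$. Conditioned on $\{E=0\}$, once $Y^{nR_n}$ is revealed $X^n$ ranges over a set of size $|\calX|^{nL_n}$, while conditioned on $\{E=1\}$ one has only the trivial bound $n\log|\calX|$; hence
\[
  H(X^n|Y^{nR_n}) \;\le\; H(E) + H(X^n|Y^{nR_n},E) \;\le\; 1 + nL_n\log|\calX| + \epsilon_n\, n\log|\calX|,
\]
and therefore $\tfrac1n I(X^n;Y^{nR_n}) \ge H(X) - L_n\log|\calX| - \epsilon_n\log|\calX| - \tfrac1n$.

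Finally I would combine this with \eqref{eq:mueps_rel} applied to $Z=Y^{nR_n}$ with $u_n^\ast\defined\mu_\epsilon(X^n|Y^{nR_n})$: chaining the two inequalities yields $u_n^\ast(H(X)-\epsilon) \le L_n\log|\calX| + \epsilon_n\log|\calX| + \tfrac1n$. Dividing by $H(X)-\epsilon>0$ and letting $n\to\infty$, using $L_n\to L$, $\epsilon_n\to 0$ and $u_n^\ast\to\mu_\epsilon$, gives $\mu_\epsilon\le L\log|\calX|/(H(X)-\epsilon)$. Together with $0\le\mu_\epsilon\le 1$ this is exactly the asserted bound; note that the rate $R$ plays no role.

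The only delicate point is the Fano step: one must split on the list-decoding error event and charge \emph{exactly} $\log|\calX|^{nL_n}=nL_n\log|\calX|$ bits on the good event — a looser accounting (e.g.\ bounding by $nR_n$ or by $n\log|\calX|$) would not close the argument — and one should verify that the hypotheses of \eqref{eq:mueps_rel}, namely a memoryless source and $0\le\epsilon\le H(X)$, are in force. The limiting argument and the trivial bounds are routine bookkeeping.
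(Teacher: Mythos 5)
Your proof is correct, but it takes a genuinely different route from the paper's. The paper argues directly on the subsequence: it fixes $\calJ$ with $|\calJ|=n\mu_{\epsilon,n}$, writes $I(X^{\calJ};Y^{nR_n})=n\mu_{\epsilon,n}H(X)-H(X^{\calJ}|Y^{nR_n})\leq n\mu_{\epsilon,n}\epsilon$ from the definition of symbol secrecy, and then bounds $H(X^{\calJ}|Y^{nR_n})\leq nL_n\log|\calX|$ because, given the codeword, $X^{\calJ}$ ranges over the projection of a list of size $|\calX|^{nL_n}$; rearranging and passing to the limit finishes the proof. You instead sandwich the \emph{full-sequence} mutual information $\tfrac1n I(X^n;Y^{nR_n})$ between the upper bound \eqref{eq:mueps_rel} and a Fano-type lower bound $H(X)-L_n\log|\calX|-\epsilon_n\log|\calX|-\tfrac1n$, and chain the two. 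Both arguments hinge on the same key fact (conditional entropy is at most the log of the list size), but yours detours through \eqref{eq:mueps_rel}, which in turn is proved by a chain-rule decomposition over $\calJ$ and $\bar{\calJ}$ — so you use the memorylessness of the source in two places where the paper uses it once. What your version buys is an explicit accounting of the list-decoding error event via the indicator $E$, which the paper's terse inequality $H(X^{\calJ}|Y^{nR_n})\leq nL_n\log|\calX|$ silently absorbs (strictly speaking that step also needs the $H(E)+\epsilon_n n\log|\calX|$ correction you supply, vanishing in the limit); what it costs is directness, and your observation that the rate $R$ plays no role applies equally to both proofs.
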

\begin{proof}
  We denote $\mu_\epsilon(X^n|Y^{nR})=\mu_{\epsilon,n} $. Note that, for
  $\calJ\subseteq [n]$ and $|\calJ|=n\mu_{\epsilon,n}$,
\begin{align*} 
  I(X^{\calJ};Y^{nR_n})&=H(X^{\calJ})-H(X^{\calJ}|Y^{nR_n})\\
  &= n\mu_{\epsilon,n}H(X)-H(X^{\calJ}|Y^{nR_n})\\
  &\leq n\mu_{\epsilon,n} \epsilon,
\end{align*}
where the last inequality follows from the definition of symbol secrecy and
$I(X^\calJ;Y^{nR_n})\leq |\calJ|\epsilon = n\mu_{\epsilon,n}\epsilon$. Therefore
\begin{align*} 
  \mu_{\epsilon,n}(H(X)-\epsilon)&\leq \frac{1}{n}H(X^{\calJ}|Y^{nR_n})\\
  &\leq L_n\log|\mathcal{X}|.
\end{align*}
The result follows by taking $n\rightarrow \infty$.
\end{proof}

The previous result bounds the amount of information an adversary gains about
particular source symbols by observing a list-source encoded
message. In particular, for $\epsilon=0$, we find a meaningful bound on what is
the largest fraction of input symbols that is
\textit{perfectly} hidden. 


The next theorem relates the rate-list function with $\epsilon$-symbol
secrecy through the upper bound in Theorem \ref{prop:epsilon_bound}.

\begin{thm}
  If a sequence of list-source codes $\{(f_n,g_{n,L_n}) \}_{n=1}^\infty$  achieves a
  point $(R',L)$ with $\mu_\epsilon(X^n|Y^{nR_n})\to \frac{L  \log
  |\calX|}{H(X)-\epsilon}\defined c_\epsilon$ for
some $\epsilon$, where $R'=\lim_{n\rightarrow \infty}\frac{1}{n}H(Y^{nR_n})$,
then $R'=R(L)$.
\end{thm}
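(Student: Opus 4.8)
The plan is to show the two inequalities $R'\geq R(L)$ and $R'\leq R(L)$ separately. The lower bound $R'\geq R(L)=H(X)-L\log|\calX|$ should follow from essentially the same counting/entropy argument as the converse in Theorem~\ref{thm:ratelist}: since $\{(f_n,g_{n,L_n})\}$ is a list-source code with vanishing error and normalized list size $L_n\to L$, the decoded lists must cover a set of probability close to $1$, and by \cite[Lemma~2.14]{csiszar_information_2011} the log-cardinality of this cover is at least $n(H(X)-\delta)$; since $\frac1n\log|g_{n,L_n}(w)|\to L\log|\calX|$, we get $R_n + L_n\log|\calX|\geq H(X)-\delta$, hence $\liminf \frac1n H(Y^{nR_n})\geq H(Y^{nR_n})$'s natural lower bound. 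One subtlety: the theorem states $R'=\lim \frac1n H(Y^{nR_n})$ rather than $\lim R_n$, so I first note $\frac1n H(Y^{nR_n})\leq R_n$ always, and argue the converse counting bound actually lower-bounds $\frac1n H(Y^{nR_n})$ too, using that on the high-probability cover the message $Y^{nR_n}$ together with the list position determines $X^n$, so $H(X^n)\leq H(Y^{nR_n}) + H(\text{position}) + o(n) \leq H(Y^{nR_n}) + nL_n\log|\calX| + o(n)$ via Fano. This gives $R'\geq H(X)-L\log|\calX|=R(L)$.

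For the reverse inequality $R'\leq R(L)$, this is where the symbol-secrecy hypothesis enters, and I expect it to be the crux. The assumption is $\mu_\epsilon(X^n|Y^{nR_n})\to c_\epsilon = \frac{L\log|\calX|}{H(X)-\epsilon}$, i.e. the symbol secrecy meets the upper bound of Lemma~\ref{prop:epsilon_bound} with equality in the limit. I would retrace the proof of Lemma~\ref{prop:epsilon_bound}: for $\calJ$ with $|\calJ| = n\mu_{\epsilon,n}$ we had $n\mu_{\epsilon,n}(H(X)-\epsilon)\leq H(X^\calJ\mid Y^{nR_n})\leq H(X^\calJ)\leq n\mu_{\epsilon,n}H(X)$, and the bound $c_\epsilon$ came from chaining $H(X^\calJ\mid Y^{nR_n})\leq L_n\log|\calX|$ — wait, more precisely we need $\frac1n H(X^\calJ\mid Y^{nR_n}) \leq$ something tied to $L_n$. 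The key observation is that $H(X^n\mid Y^{nR_n})\leq \log|g_{n,L_n}|+o(n) = nL_n\log|\calX| + o(n)$ by Fano (the list has that size and $X^n$ lies in it w.h.p.), and $H(X^\calJ\mid Y^{nR_n})\leq H(X^n\mid Y^{nR_n})$. Equality in $\mu_\epsilon \to c_\epsilon$ forces all these inequalities to be asymptotically tight, in particular $H(X^n\mid Y^{nR_n}) = nL\log|\calX| + o(n)$. Then $\frac1n H(Y^{nR_n}) = \frac1n I(X^n;Y^{nR_n}) + o(1) \leq \frac1n\big(H(X^n) - H(X^n\mid Y^{nR_n})\big) = H(X) - L\log|\calX| + o(1) = R(L)$, where I use $H(Y^{nR_n})\leq H(Y^{nR_n}\mid X^n) + I(X^n;Y^{nR_n})$ and $H(Y^{nR_n}\mid X^n)=0$ because $Y^{nR_n}=f_n(X^n)$ is a deterministic function of the source.

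The main obstacle is making the ``equality forces tightness'' step rigorous: I must confirm that $\mu_\epsilon(X^n|Y^{nR_n})\to c_\epsilon$ genuinely implies $\frac1n H(X^n\mid Y^{nR_n})\to L\log|\calX|$, rather than merely $\liminf$. The danger is that $\mu_{\epsilon,n}$ being close to $c_\epsilon$ only pins down $H(X^{\calJ}\mid Y^{nR_n})$ for the particular $\calJ$ of size $n\mu_{\epsilon,n}$, not the full conditional entropy $H(X^n\mid Y^{nR_n})$. To close this gap I would split $[n]=\calJ\cup\bar\calJ$, write $H(X^n\mid Y^{nR_n}) = H(X^\calJ\mid Y^{nR_n}) + H(X^{\bar\calJ}\mid X^\calJ, Y^{nR_n})$, upper-bound the second term by $H(X^{\bar\calJ}\mid X^\calJ)=(n-|\calJ|)H(X)$ (memorylessness), and combine with the Fano upper bound $H(X^n\mid Y^{nR_n})\leq nL_n\log|\calX|+o(n)$ to squeeze $H(X^\calJ\mid Y^{nR_n})$ from below; the hypothesis $\mu_{\epsilon,n}\to c_\epsilon$ then forces $|\calJ|H(X) - I(X^\calJ;Y^{nR_n})\geq |\calJ|(H(X)-\epsilon)$ to be met tightly, i.e. $I(X^\calJ;Y^{nR_n})\approx n\mu_{\epsilon,n}\epsilon$, which together with the chain rule $I(X^n;Y^{nR_n}) = I(X^\calJ;Y^{nR_n}) + I(X^{\bar\calJ};Y^{nR_n}\mid X^\calJ)$ and $I(X^{\bar\calJ};Y^{nR_n}\mid X^\calJ)\leq (n-|\calJ|)H(X)$ yields the matching upper bound $\frac1n I(X^n;Y^{nR_n})\leq c_\epsilon\epsilon + (1-c_\epsilon)H(X) + o(1)$; a short computation checks $c_\epsilon\epsilon + (1-c_\epsilon)H(X) = H(X) - L\log|\calX| = R(L)$, completing the argument.
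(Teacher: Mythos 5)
Your argument is correct and is essentially the paper's proof: the direction $R'\leq R(L)$ is precisely the bound \eqref{eq:mueps_rel} applied with $\mu_{\epsilon,n}\to c_\epsilon$ (your closing chain via $I(X^n;Y^{nR_n})=I(X^{\calJ};Y^{nR_n})+I(X^{\bar{\calJ}};Y^{nR_n}\mid X^{\calJ})$ uses only the one-sided inequality $I(X^{\calJ};Y^{nR_n})\leq|\calJ|\epsilon$ from the definition of symbol secrecy, so the ``equality forces tightness'' machinery you worry about is unnecessary), while the direction $R'\geq R(L)$ is the converse of Theorem~\ref{thm:ratelist}, which the paper simply cites. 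Your extra Fano step converting the bound on the nominal rate $R_n$ into a bound on $\tfrac{1}{n}H(Y^{nR_n})$ is a worthwhile refinement of a point the paper leaves implicit.
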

\begin{proof}
  Assume that  $\{(f_n,g_{n,L_n}) \}_{n=1}^\infty$  satisfies the conditions in the theorem and $\delta>0$
is given. Then for $n$ sufficiently large, we have from \eqref{eq:mueps_rel}:
  \begin{align*}  
    \frac{1}{n}H(Y^{nR_n}) &= \frac{1}{n}I(X^{n} ; Y^{nR_n})\\
    &\leq   H(X) - c_{\epsilon}(H(X)-\epsilon)+\delta\\
    & = H(X) - L\log|\calX|+\delta .
  \end{align*}
Since this holds for any $\delta$, then $R'\leq H(X) - L\log|\cal X|$. However,
from Theorem \ref{thm:ratelist}, $R'\geq H(X) - L\log|\cal X|$, and the result follows. 
\end{proof}

\subsection{A scheme based on MDS codes}

We now prove that for a uniform i.i.d. source $X$ in $\Fq$, using scheme
\ref{scheme:lin} with an MDS parity check matrix $\bH$ achieves $\mu_0$. Since
the source is uniform and i.i.d., no source coding is used.

\begin{prop}
  \label{prop:MDS}
If $\bH$ is the parity check matrix of an $(n,k,d)$ MDS code and the source $X^n$ is
uniform and i.i.d., then Scheme \ref{scheme:lin} achieves the upper bound $\mu_0
= L$, where $L=k/n$.
\end{prop}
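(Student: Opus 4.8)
The plan is to pair the upper bound already supplied by Lemma~\ref{prop:epsilon_bound} with a direct proof that every subset of at most $k$ source coordinates is statistically independent of the encoder output. Since $X^n$ is uniform and i.i.d.\ over $\Fq$, no source coding is used, so Scheme~\ref{scheme:lin} outputs an injective relabelling of the syndrome $\bsigma_n=\bH X^n$, where $\bH$ is the $(n-k)\times n$ parity-check matrix; hence $I(X^\calJ;Y^{nR})=I(X^\calJ;\bH X^n)$ for every $\calJ\subseteq[n]$. Each syndrome coset $\bx+\mathcal C$ has exactly $q^k$ elements, so the normalized list size is $L=k/n$, and since $H(X)=\log q=\log|\calX|$, the bound $\mu_0(X^n|Y^{nR})\le L$ follows from Lemma~\ref{prop:epsilon_bound} with $\epsilon=0$ (its argument yields the inequality $\mu_0\,H(X)\le L\log|\calX|$ already at fixed blocklength, using that each coset is equiprobable of size $q^k$). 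It therefore remains to establish the matching lower bound $\mu_0(X^n|Y^{nR})\ge k/n$.

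Fix $\calJ\subseteq[n]$ with $0<|\calJ|=t\le k$, let $\bar{\calJ}=[n]\setminus\calJ$, and decompose $\bH X^n=\bH_\calJ X^\calJ+\bH_{\bar{\calJ}}X^{\bar{\calJ}}$, where $\bH_\calJ$ and $\bH_{\bar{\calJ}}$ are the column submatrices of $\bH$ indexed by $\calJ$ and $\bar{\calJ}$. The crucial input is the defining property of an $(n,k,d)$ MDS code, namely $d=n-k+1$, which is equivalent to every $n-k$ columns of $\bH$ being linearly independent. Since $|\bar{\calJ}|=n-t\ge n-k$, the matrix $\bH_{\bar{\calJ}}$ contains $n-k$ linearly independent columns and hence has full row rank $\rank(\bH_{\bar{\calJ}})=n-k$, i.e.\ it is a surjective linear map onto $\Fq^{n-k}$.

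Now condition on $X^\calJ=\bx^\calJ$ for an arbitrary $\bx^\calJ\in\Fq^t$: the vector $X^{\bar{\calJ}}$ is uniform on $\Fq^{n-t}$ and independent of $X^\calJ$, and pushing a uniform distribution through the surjective linear map $\bH_{\bar{\calJ}}$ produces a uniform distribution on $\Fq^{n-k}$; adding the fixed offset $\bH_\calJ\bx^\calJ$ preserves uniformity. Thus the conditional law of $\bH X^n$ given $X^\calJ=\bx^\calJ$ is uniform on $\Fq^{n-k}$ for every $\bx^\calJ$, so $\bH X^n$ is independent of $X^\calJ$ and $I(X^\calJ;\bH X^n)=0$. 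As this holds for all $\calJ$ with $|\calJ|\le k$, we conclude $\mu_0(X^n|Y^{nR})\ge k/n=L$, which together with the reverse inequality gives $\mu_0=L$.

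There is no serious obstacle here: the single point requiring care is the identification of the MDS hypothesis with the statement that every $n-k$ columns of the parity-check matrix are linearly independent, whence $\bH_{\bar{\calJ}}$ has full row rank as soon as $|\calJ|\le k$; the remainder is the elementary fact that a uniform random vector mapped through a surjective linear map stays uniform, and a deterministic translation does not change this. One should also note, as above, that mutual information is preserved under the injective relabelling of syndromes built into Scheme~\ref{scheme:lin}, so working with $\bH X^n$ in place of $Y^{nR}$ is legitimate.
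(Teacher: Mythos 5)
Your proof is correct, and it reaches the same conclusion as the paper's by a dual route. The paper argues on the code side: since the minimum distance is $n-k+1$, any two words in a coset of $\calC$ differ on every set of $k$ coordinates, so the projection of each coset onto any $\calJ$ with $|\calJ|=k$ is a bijection onto $\Fq^k$; combined with the uniformity of the source, the conditional law of $X^{\calJ}$ given the syndrome is uniform, hence independent. You instead argue on the parity-check side: MDS is equivalent to every $n-k$ columns of $\bH$ being linearly independent, so $\bH_{\bar{\calJ}}$ is surjective whenever $|\calJ|\leq k$, and pushing the uniform law of $X^{\bar{\calJ}}$ through it (plus a deterministic offset) makes the syndrome conditionally uniform given $X^{\calJ}$. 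These are equivalent characterizations of MDS-ness and the two independence arguments are mirror images (conditioning on the syndrome versus conditioning on the sub-vector). What your write-up adds is explicitness on two points the paper leaves implicit: the matching upper bound $\mu_0\leq L$ via Lemma~\ref{prop:epsilon_bound} at $\epsilon=0$ (valid at finite blocklength here because the MDS scheme has zero list-decoding error), and the observation that the injective relabelling of syndromes in Scheme~\ref{scheme:lin} preserves mutual information, so one may work with $\bH X^n$ directly. No gaps.
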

\begin{proof}
Let $\calC$ be the set of codewords of an  $(n,k,n-k+1)$ MDS code
over $\Fq$ with parity matrix $\bH$, and let $\bx\in \calC$. Fix a set $\calJ \in
\mathcal{I}_{n}(k)$ of $k$ positions of $\bx$, denoted $\bx^{\calJ }$. Since
the minimum distance of $\calC$ is $n-k+1$, for any other codeword in $\bz \in
\calC$ we have  $\bz^{\calJ}\neq \bx^{\calJ}$. Denoting by
$\calC^{\calJ}=\{\bx^{\calJ} \in \Fq^k:x\in \calC \}$, then
$|\calC^{\calJ}|=|\calC|=q^k.$ Therefore, $\calC^{\calJ}$ contains all
possible combinations of $k$ symbols. Since this property also holds for any
coset of $\bH$, the result follows.

\end{proof}

\section{A Rate-Distortion View of Symbol Secrecy}
\label{sec:beyond}

Symbol secrecy provides a fine-grained metric for quantifying the amount of
information that leaks from a security system. However, standard cryptographic
definitions of security are concerned  not only with what an eavesdropper learns
about individual symbols of the plaintext, but also which \textit{functions} of
the plaintext an adversary can reliably infer.  In order to derive analogous
information-theoretic metrics for security, in this section we take a step back
from the symmetric-key encryption setup and study the general estimation problem of
inferring properties of a hidden variable $X$ from an observation $Y$. More
specifically, we derive lower bounds for the error of estimating functions of $X$ from an
observation of $Y$. By using standard converse results (e.g. Fano's inequality
\cite[Chap. 2]{cover_elements_2006}),
symbol secrecy guarantees are then translated to guarantees on how well certain
functions of the plaintext can or cannot be estimated.

We first derive converse bounds for the minimum-mean-squared-error (MMSE) of
estimating a function $\phi$ of the hidden variable $X$ given $Y$. We assume that
the MMSE of estimating a set of functions $\Phi\defined \{\phi_j(X)\}_{i=1}^m$ given $Y$ is
known, as well as the correlation between $\phi_j(X)$ and $\phi(X)$. Bounds for the
MMSE of $\phi(X)$ are then expressed in terms of the MMSE of each $\phi_j(X)$ and the
correlation between $\phi(X)$ and $\phi_j(X)$.  We also apply this result to the
setting where $\phi$ and $\phi_j$ are binary functions, and present bounds for the
probability of correctly guessing $\phi(X)$ given $Y$. These results are of
independent interest, and are particularly useful in the security setting
considered here.

The set of functions $\Phi$ can be used  to model known properties of a
security system. For example, when $X$ is a plaintext and $Y$ is a ciphertext,
the functions $\phi_j$ may represent certain predicates of $X$ that are known to be
hard to infer given $Y$. In privacy systems,  $X$ may be a user's data and $Y$ a
distorted version of $X$ generated by a privacy preserving mechanism.  The set
$\Phi$ could then represent a set of functions that are known to
be easy to infer from $Y$ due to inherent utility constraints of the setup.  In
particular, as will be shown in Section \ref{sec:from_symbol},
we will consider the functions in $\Phi$ as the individual symbols of the plaintext. In this case, the results
introduced in this section are used to  derive bounds on the MMSE of
reconstructing a target function of the plaintext in terms of the symbol-secrecy
achieved by the underlying list-source code given by the encryption scheme.
This result extends symbol secrecy to a broader setting.

\subsection{Lower Bounds for MMSE}
\label{sec:MMSE}
The results introduced in this section are based on the following Lemma.

\begin{lem}
  \label{lem:quadBound}
  Let $z_n:(0,\infty)^n\times[0,1]^n\rightarrow \Reals$ be given by 
    \begin{equation}
      z_n(\ba,\bb) \defined \max\left\{ \ba^T\by \middle| 
      \by\in\Reals^n,\|y\|_2\leq 1, \by\leq \bb  \right\}. 
      \label{eq:defn_Ln}
   \end{equation}
   
 Let $\pi$ be a permutation of $(1,2,\dots,n)$ such that
       $b_{\pi(1)}/a_{\pi(1)}\leq \dots\leq b_{\pi(n)}/a_{\pi(n)}$.
    If $b_{\pi(1)}/a_{\pi(1)}\geq 1$, $z_n(\ba,\bb)=\|\ba\|_2$. Otherwise,
        \begin{align}
          z_n(\ba,\bb) =&     \sum_{i=1}^{k^*} a_{\pi(i)}b_{\pi(i)} \nonumber \\
          &+ \sqrt{\left(
            \|\ba\|_2^2-\sum_{i=1}^{k^*} a_{\pi(i)}^2
            \right)\left(1-\sum_{i=1}^{k^*}
            b_{\pi(i)}^2\right)}
          \end{align}
     where
     \begin{equation}
       \label{eq:kstar}
        k^*\defined \max\left\{k\in [n] ~\middle|~  
          \frac{b_{\pi(k)}}{a_{\pi(k)}}\leq \sqrt{\frac{\left(1-\sum_{i=1}^{k-1}
          b_{\pi(i)}^2\right)^+}{\|\ba\|_2^2-\sum_{i=1}^{k-1} a_{\pi(i)}^2
        }} \right\}.
       \end{equation}

\end{lem}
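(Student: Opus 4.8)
The plan is to read \eqref{eq:defn_Ln} as a convex program --- maximizing the linear functional $\ba^T\by$ over the intersection of the Euclidean unit ball with the ``half--box'' $\{\by\le\bb\}$ --- and to obtain matching bounds by Lagrangian relaxation of the norm constraint. First, two reductions. Since $\ba>\mathbf{0}$ and the only coordinate bounds are the upper bounds $y_i\le b_i$, any optimal $\by^\star$ may be taken with $\by^\star\ge\mathbf{0}$ (replacing a negative coordinate by $0$ keeps feasibility and does not decrease the objective), and by relabeling we may assume the sorting $b_1/a_1\le\cdots\le b_n/a_n$ already holds, since objective and constraints are invariant under a simultaneous permutation of the entries of $\ba,\bb,\by$. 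In the case $b_{\pi(1)}/a_{\pi(1)}\ge1$ the unconstrained maximizer $\ba/\|\ba\|_2$ of $\ba^T\by$ over the unit ball is feasible for $\by\le\bb$, giving $z_n(\ba,\bb)=\|\ba\|_2$ with nothing further to prove; assume henceforth this fails, so at least one box constraint binds at an optimum.

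For the main case, introduce a multiplier $\lambda>0$ for $\|\by\|_2\le1$. For every feasible $\by$,
\begin{align*}
  \ba^T\by &\le \ba^T\by+\lambda\bigl(1-\|\by\|_2^2\bigr)
  = \lambda+\sum_{i=1}^n\bigl(a_iy_i-\lambda y_i^2\bigr)\\
  &\le \lambda+\sum_{i\in A(\lambda)}\bigl(a_ib_i-\lambda b_i^2\bigr)+\frac1{4\lambda}\sum_{i\notin A(\lambda)}a_i^2,
  \qquad A(\lambda)\defined\{\,i:\ b_i/a_i\le 1/(2\lambda)\,\},
\end{align*}
because $t\mapsto a_it-\lambda t^2$ is a concave parabola peaked at $t=a_i/(2\lambda)$, so its maximum over $t\le b_i$ equals $a_ib_i-\lambda b_i^2$ when $b_i/a_i\le1/(2\lambda)$ and $a_i^2/(4\lambda)$ otherwise. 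The key structural point is that, by the sorting, $A(\lambda)$ is always a \emph{prefix} $\{1,\dots,k\}$ and grows as $\lambda$ decreases: this is exactly the KKT picture in which the ``saturated'' coordinates sit at $y_i=b_i$, the ``free'' ones are proportional to $a_i$, and the split is governed by a single threshold on $b_i/a_i$.

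It remains to pick $\lambda$ making the bound tight. For a prefix $\{1,\dots,k\}$ set $B_k\defined\sum_{i\le k}b_i^2$, $A_k\defined\sum_{i\le k}a_i^2$, $c_k\defined\sqrt{(1-B_k)/(\|\ba\|_2^2-A_k)}$. Choosing $\lambda=1/(2c_k)$ reduces the displayed right-hand side, after simplification, to $\sum_{i\le k}a_ib_i+\sqrt{(\|\ba\|_2^2-A_k)(1-B_k)}$, and this value is attained by the explicit point $y_i=b_i$ ($i\le k$), $y_i=c_ka_i$ ($i>k$), which has unit norm by construction and is feasible precisely when $c_k\le b_{k+1}/a_{k+1}$; moreover $A(1/(2c_k))=\{1,\dots,k\}$ exactly when $b_k/a_k\le c_k\le b_{k+1}/a_{k+1}$. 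So everything reduces to finding the $k$ satisfying these sandwich inequalities and identifying it with the $k^*$ of \eqref{eq:kstar}. Here I would invoke the elementary ``mediant'' identity $\tfrac{b^2}{a^2}\le\tfrac{N}{D}\iff\tfrac{b^2}{a^2}\le\tfrac{N-b^2}{D-a^2}$ (for $D>a^2>0$, $N\ge b^2$) with $N=1-B_{k-1}$, $D=\|\ba\|_2^2-A_{k-1}$: it shows the defining inequality $b_k/a_k\le c_{k-1}$ of \eqref{eq:kstar} is equivalent to $b_k/a_k\le c_k$; a short monotonicity argument then shows $\{k:\ b_k/a_k\le c_{k-1}\}$ is a prefix $\{1,\dots,k^*\}$, that $b_{k^*}/a_{k^*}\le c_{k^*}$ (the saturated block is consistent), and that $c_{k^*}<b_{k^*+1}/a_{k^*+1}$ (the free block is feasible, since $k^*+1$ fails the defining inequality). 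Plugging $k=k^*$ into the value above yields the claimed formula, with the boundary cases $k^*=n$ (all coordinates saturated, e.g.\ when $\|\bb\|_2\le1$, where the square-root term vanishes) and coordinates with $b_i=0$ (dropped at the outset) covered automatically.

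The main obstacle is this last ``water-filling'' bookkeeping: matching the self-consistency conditions $b_k/a_k\le c_k\le b_{k+1}/a_{k+1}$ on the Lagrangian threshold with the explicit cutoff \eqref{eq:kstar}. The mediant identity is what makes the two descriptions line up; the convex-analytic parts (existence and boundary attainment of the optimum, the per-coordinate parabola maximization, and the verification that the exhibited primal point meets the dual bound) are routine.
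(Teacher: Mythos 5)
Your proof is correct and, at the top level, shares the paper's strategy: exhibit the explicit primal point $y_i=b_i$ for $i\le k^*$, $y_i=c_{k^*}a_i$ for $i>k^*$, and certify its optimality by a matching upper bound obtained from weak duality. Where you differ is in the choice of dual certificate. The paper dualizes the box constraints $\by\le\bb$, writing down the dual program $z_D(\bu)=\ba^T\bb+\bu^T\bb+\|\bu\|_2$ over $\bu\ge-\ba$ and exhibiting an explicit dual vector $\bu^*=(-b_1/c_{k^*},\dots,-b_{k^*}/c_{k^*},-a_{k^*+1},\dots,-a_n)$ whose value equals $z_P(\by^*)$. You instead dualize the single norm constraint with a scalar multiplier $\lambda$, maximize each coordinate's concave parabola over $y_i\le b_i$, and tune $\lambda=1/(2c_{k^*})$; this is more elementary in that it avoids having to guess the vector dual, at the cost of a per-coordinate case analysis. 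Notably, your ``mediant'' identity and the prefix/monotonicity bookkeeping identifying the Lagrangian threshold with the $k^*$ of \eqref{eq:kstar} is precisely the step the paper elides: it simply asserts that ``from the definition of $k^*$, $\by^*\in\calB(\bb)$ and $\bu^*\in\calA(\ba)$,'' and verifying those two memberships requires exactly the inequalities $b_{k^*}/a_{k^*}\le c_{k^*}\le b_{k^*+1}/a_{k^*+1}$ that you derive, so your write-up actually fills a gap in the published argument. One shared caveat: your claim that $\ba/\|\ba\|_2$ is feasible whenever $b_{\pi(1)}/a_{\pi(1)}\ge1$ needs $\|\ba\|_2\ge1$ (for $n=1$, $a_1=0.5$, $b_1=0.6$ the first case of the lemma as stated already fails); this is an imprecision of the statement itself, harmless in the paper's application where $\|\brho_0\|_2=1$, and the paper's proof does not treat that case at all.
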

\begin{proof}
The proof is given in the appendix.
\end{proof}

Throughout this section we assume   $\Phi\subseteq \calL_2(p_X)$ and
$\EE{\phi_i(X)\phi_j(X)}=0$ for $i\neq j$.
Furthermore, let $Y$ be an observed variable that is dependent of $X$, and for a
given $\phi_i$ the inequality 
\begin{equation*}
  \max_{\psi\in\calL_2(p_Y)} \EE{\phi_i(X)\psi(Y)}=\|\EE{\phi_i(X)|Y} \|_2\leq \lambda_i
\end{equation*}
is satisfied, where
$0\leq \lambda_i\leq 1$. This is equivalent to $\mmse(\phi_i(X)|Y)\geq
1-\lambda_i^2$.

\begin{thm}
  \label{prop:loose}
  Let   $|\EE{\phi(X)\phi_i(X)}|=\rho_i>0$. Denoting
  $\brho\defined (|\rho_1|,\dots,|\rho_m|)$,  $\blambda\defined (\lambda_1,\dots,\lambda_m)$,
  $\rho_0\defined\sqrt{1-\sum_{i=1}^k\rho_i^2} $, $\lambda_0=1$  $\brho_0\defined (\rho_0,\brho)$ and
  $\blambda_0\defined(\lambda_0,\blambda)$, then
  \begin{equation}
    \|\EE{\phi(X)|Y} \|_2\leq B_{|\Phi|}(\brho_0,\blambda_0),
    \label{eq:firstBound}
  \end{equation}
  where
  \begin{equation}
    \label{eq:Bmdef}
    B_{|\Phi|}(\brho_0,\blambda_0)\defined
     \begin{cases}
      z_{|\Phi|+1}\left( \brho_0,\blambda_0 \right),  & \mbox{if } \rho_0>0,\\
    z_{|\Phi|}(\brho,\blambda), & \mbox{otherwise.}  
       \end{cases}
  \end{equation}
  and $z_{n}$ is given in \eqref{eq:defn_Ln}. Consequently,
  \begin{equation}
    \mmse(\phi(X)|Y)\geq 1-B_{|\Phi|}(\brho_0,\blambda_0)^2.
  \end{equation}
\end{thm}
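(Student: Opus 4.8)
The plan is to reduce Theorem~\ref{prop:loose} to a single application of the extremal quantity $z_n$ from Lemma~\ref{lem:quadBound}, via a Hilbert-space projection argument in $\calL_2(p_Y)$. First I would set $g\defined \EE{\phi(X)|Y}$ and $g_i\defined \EE{\phi_i(X)|Y}$, so that by the tower rule $\EE{g(Y)\phi_i(X)} = \EE{g_i(Y)g(Y)} = \EE{\phi_i(X)\phi(X)} = \pm\rho_i$, and $\|g_i\|_2\leq \lambda_i$ by hypothesis while $\|g\|_2\leq 1$ since $\phi\in\calL_2(p_X)$. Because $\EE{\phi_i(X)\phi_j(X)}=0$ for $i\neq j$, the conditional expectations $\{g_i\}$ need not be orthogonal in general, so the key step is to pass to the \emph{normalized} functions $\psi_i\defined g_i/\|g_i\|_2$ (assuming $\lambda_i>0$; the degenerate case $\lambda_i=0$ forces $\rho_i=0$ and drops out) and to observe that $g$ can be decomposed along the $\psi_i$ directions plus an orthogonal remainder. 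Writing $c_i\defined \EE{g(Y)\psi_i(Y)}$, one has $|c_i| = |\rho_i|/\|g_i\|_2 \geq |\rho_i|/\lambda_i$, which is exactly the ratio structure $b_{\pi(k)}/a_{\pi(k)}$ controlling $k^*$ in \eqref{eq:kstar}.

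The second step is to absorb the ``residual'' correlation. Since $\phi$ is a unit function and $\{\phi_i\}$ are orthonormal in $\calL_2(p_X)$, Bessel's inequality gives $\sum_i \rho_i^2 \leq 1$, so $\rho_0\defined\sqrt{1-\sum_i\rho_i^2}$ is a well-defined nonnegative number representing the component of $\phi(X)$ orthogonal to $\mathrm{span}\{\phi_i\}$. Treating this orthogonal component as an additional ``virtual'' function with correlation $\rho_0$ and trivial bound $\lambda_0=1$ (since any unit function has conditional-expectation norm at most $1$), I would show that $\|g\|_2^2$ is maximized exactly by the program defining $z_{|\Phi|+1}(\brho_0,\blambda_0)$: we are maximizing $\brho_0^T\by$ over $\by$ with $\|\by\|_2\leq 1$ (which encodes $\|g\|_2\leq 1$ together with the constraint that $g$ correlates with each direction no more than its norm allows) and $\by\leq\blambda_0$ componentwise (the per-function MMSE constraints). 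When $\rho_0=0$ the extra coordinate is unnecessary and one uses $z_{|\Phi|}(\brho,\blambda)$ directly, which is the case split in \eqref{eq:Bmdef}. The final conclusion $\mmse(\phi(X)|Y)\geq 1-\|\EE{\phi(X)|Y}\|_2^2 \geq 1-B_{|\Phi|}(\brho_0,\blambda_0)^2$ is then the standard MMSE identity $\mmse(\phi(X)|Y) = \EE{\phi(X)^2}-\|\EE{\phi(X)|Y}\|_2^2 = 1-\|g\|_2^2$ combined with the bound just derived.

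The main obstacle I anticipate is justifying rigorously that the correlations $c_i$ (equivalently the $\by$-vector) can simultaneously attain any values permitted by the $\ell_2$-ball and box constraints, i.e. that the Hilbert-space feasible set is \emph{exactly} the polytope-intersect-ball appearing in \eqref{eq:defn_Ln} rather than a strict subset of it. In the direction needed for the theorem---an \emph{upper} bound on $\|g\|_2$---this is the easy direction: any admissible $g$ yields an admissible $\by=(\,\EE{g\psi_0}, c_1,\dots,c_m\,)$ with the right properties, where $\psi_0$ spans the orthogonal residual direction; hence $\|g\|_2 = \brho_0^T\by/\|g\|_2 \cdot \|g\|_2$... more cleanly, $\|g\|_2^2 = \sum (\text{coefficients})^2 \leq z_{|\Phi|+1}(\brho_0,\blambda_0)^2$ after matching the optimization. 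Care is needed with the non-orthogonality of the $g_i$: I would handle this by Gram--Schmidt on $\{g_i\}$ and checking that orthogonalization only relaxes the constraints, or alternatively by noting that it is the correlations of $g$ with the \emph{original} $\phi_i(X)$ (which are genuinely orthonormal) that matter, pushing all the subtlety onto the $Y$-side projection, which is then routine.
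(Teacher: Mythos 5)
There is a genuine gap at the foundation of your argument. You set $g=\EE{\phi(X)|Y}$, $g_i=\EE{\phi_i(X)|Y}$ and assert the chain $\EE{g(Y)\phi_i(X)}=\EE{g_i(Y)g(Y)}=\EE{\phi_i(X)\phi(X)}=\pm\rho_i$. The first equality is the tower rule and is fine, but the second is false in general: $\EE{(T_Y\phi)(Y)\,(T_Y\phi_i)(Y)}$ is the inner product of the two conditional expectations, not of $\phi$ and $\phi_i$ themselves (take $Y$ independent of $X$: the left side is $\EE{\phi(X)}\EE{\phi_i(X)}$ while $\rho_i$ can be anything). Everything downstream of this — the claim $|c_i|=|\rho_i|/\|g_i\|_2$, the identification of $b_{\pi(k)}/a_{\pi(k)}$ with these ratios, and the decomposition of $g$ along the (non-orthogonal) directions $\psi_i=g_i/\|g_i\|_2$ — therefore does not go through. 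Your proposed repairs (Gram--Schmidt on $\{g_i\}$ and ``checking that orthogonalization only relaxes the constraints'') do not obviously rescue it, because after orthogonalizing on the $Y$-side the correlations with $g$ are no longer controlled by the $\rho_i$ or the $\lambda_i$.

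The fix is the route you mention only in passing in your last sentence, and it is exactly what the paper does: work with the variational characterization $\|\EE{\phi(X)|Y}\|_2=\max_{\psi\in\calL_2(p_Y)}\EE{\phi(X)\psi(Y)}$ and do the orthogonal decomposition entirely on the $X$-side, writing $\phi=\rho_0 h+\sum_i\rho_i'\phi_i$ with $h\in\calL_2(p_X)$ orthogonal to the $\phi_i$. For a fixed unit $\psi$, set $x_i=|\EE{\phi_i(X)(T_X\psi)(X)}|$ and $x_0=|\EE{h(X)(T_X\psi)(X)}|$; then $\|\bx\|_2\le 1$ by Bessel's inequality applied to the genuinely orthonormal system $\{h,\phi_1,\dots,\phi_m\}$ and the contraction $\|T_X\psi\|_2\le\|\psi\|_2=1$, while $x_i=|\EE{(T_Y\phi_i)(Y)\psi(Y)}|\le\lambda_i$ by Cauchy--Schwarz and the hypothesis. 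The triangle inequality then gives $|\EE{\phi(X)\psi(Y)}|\le\brho_0^T\bx\le z_{m+1}(\brho_0,\blambda_0)$, and taking the supremum over $\psi$ yields \eqref{eq:firstBound}. Your treatment of the residual direction ($\rho_0$ via Bessel, $\lambda_0=1$), the case split when $\rho_0=0$, and the closing MMSE identity $\mmse(\phi(X)|Y)=1-\|\EE{\phi(X)|Y}\|_2^2$ are all correct and match the paper; it is the $Y$-side projection scaffolding that needs to be replaced.
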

\begin{proof}
  Let $h(X)\defined \rho_0^{-1}(\phi(X)-\sum_i \rho_i \phi_i(X))$ if $\rho_0>0$,
  otherwise $h(X)=0$. Note
  that $h(X)\in \calL_2(p_X)$. Then
  for $\psi\in \calL_2(p_Y)$
  \begin{align*}
    \left| \EE{\phi(X)\psi(Y)} \right| &= \left| \rho_0\EE{h(X)\psi(Y)}+\sum_{i=1}^m \rho_i
    \EE{\phi_i(X)\psi(Y)} \right| \\
    &\leq  \rho_0 \left| \EE{h(X)\psi(Y)}\right|+\sum_{i=1}^m  \left|\rho_i
\EE{\phi_i(X)\psi(Y)} \right| \\
&= \rho_0 \left| \EE{h(X)(T_X\psi)(X)}\right|+\sum_{i=1}^m  \left|\rho_i
\EE{\phi_i(X)(T_X\psi)(X)} \right|.
  \end{align*}
    Denoting $|\EE{h(X)(T_X\psi)(X)}|\defined x_0$, $ |\EE{\phi_i(X)(T_X\psi)(X)}|\defined x_i$,
    $\bx \defined (x_0,x_1,\dots,x_m)$, and $\brho \defined
    (\rho_0,|\rho_1|,\dots,|\rho_m|)$,  the last inequality can be rewritten as
    \begin{align}
          \label{eq:bound_x}
         \left| \EE{\phi(X)\psi(Y)} \right| &\leq \brho_0^T \bx.
    \end{align}    
    
   Observe that $\|\bx\|_2 \leq 1$ and $x_i\leq \lambda_i$ for $i=0,\dots,m$,
   and the right hand side of \eqref{eq:bound_x} can be maximized over all
   values of $\bx$ that satisfy these constraints. We assume, without loss of
   generality, that $\rho_0> 0$ (otherwise  set
   $x_0=0$). The left-hand side of \eqref{eq:bound_x}  can be
   further bounded by 
    \begin{equation}
      \left| \EE{\phi(X)\psi(Y)} \right| \leq z_{m+1}(\brho_0,\blambda_0),
    \end{equation}
    where $\blambda=(1,\lambda_1,\dots,\lambda_m)$ and $z_{m+1}$ is defined in
    \eqref{eq:defn_Ln}. The result follows directly from Lemma
    \ref{lem:quadBound} and noting that $\max_{\psi\in \calL_2(p_Y)}\EE{\phi(X)\psi(Y)}=\|
    \EE{\phi(X)|Y}\|_2.$    
\end{proof}

Denote $\psi_i\defined T_Y \phi_i/\|T_Y \phi_i\|_2$ and $\phi_0(X) \defined (\phi(X)-\sum_{i=1}^m\rho_i
\phi_i(X))/\rho_0^{-1}$. The previous bound can be further improved when
$\EE{\psi _i(Y)\phi_j(X)}=0$ for $i\neq j,\, j\in \{0,\dots,m\}$.

\begin{thm}
  \label{prop:tighter}
  Let  $|\EE{\phi(X)\phi_i(X)}|=\rho_i>0$ for $\phi_i\in \Phi$. In
  addition, assume $\EE{\psi _i(Y)\psi _j(Y)}=0$ for $i\neq j$, $i\in [t]$ and $j\in
  \{0,\dots,|\Phi|\}$, where $0\leq t\leq |\Phi|$. Then
  \begin{equation}
    \|\EE{\phi (X)|Y} \|_2\leq \sqrt{\sum_{k=1}^t \lambda_i^2\rho_i^2 +
    B_{|\Phi|-t}\left(\tilde{\brho},\tilde{\blambda}\right)^2},
  \end{equation}
    where $\tilde{\brho}=(\rho_0,\rho_t,\dots,\rho_m)$,
    $\tilde{\blambda}=(1,\lambda_t,\dots,\lambda_m)$ and $B_m$ is defined in
    \eqref{eq:Bmdef} (considering $B_0=0$). In particular, if $t=m$,
    \begin{equation} 
      \|\EE{\phi (X)|Y} \|_2\leq  \sqrt{\rho_0^2+\sum_{k=1}^{|\Phi|} \lambda_i^2\rho_i^2 },
      \label{eq:sharp}
    \end{equation}
    and this bound is tight when $\rho_0=0$. Furthermore,
    \begin{equation} 
        \mmse(\phi (X)|Y)\geq 1-\sum_{k=1}^t \lambda_i^2\rho_i^2-
    B_{|\Phi|-t}\left(\tilde{\brho},\tilde{\blambda}\right)^2.
    \end{equation}
\end{thm}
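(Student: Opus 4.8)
The plan is to control $\normEuc{\EE{\phi(X)|Y}}=\max_{\psi\in\calL_2(p_Y)}\EE{\phi(X)\psi(Y)}$ by using the extra orthogonality among the $\psi_i$'s to strip off the first $t$ coordinate functions \emph{exactly}, and then to bound the remaining $|\Phi|-t$ coordinates with Theorem~\ref{prop:loose}. First I would write $\phi=\sum_{i=1}^{t}\rho_i\phi_i+\tilde\phi$ with $\tilde\phi\defined\phi-\sum_{i=1}^{t}\rho_i\phi_i=\rho_0\phi_0+\sum_{i=t+1}^{m}\rho_i\phi_i$, so that $\normEuc{\tilde\phi}^2=\rho_0^2+\sum_{i>t}\rho_i^2\defined N^2=1-\sum_{i\le t}\rho_i^2$. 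Because the hypothesis forces $\psi_1,\dots,\psi_t$ to be orthonormal functions of $Y$, every unit-norm $\psi\in\calL_2(p_Y)$ splits as $\psi=\sum_{i=1}^{t}c_i\psi_i+\psi^\perp$ with $c_i=\EE{\psi(Y)\psi_i(Y)}$, $\psi^\perp$ orthogonal to $\psi_1,\dots,\psi_t$, and $\sum_{i=1}^{t}c_i^2+\normEuc{\psi^\perp}^2=1$.

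The computational heart is the identity $\EE{\phi(X)\psi(Y)}=\sum_{i=1}^{t}\rho_i c_i\normEuc{\EE{\phi_i(X)|Y}}+\EE{\tilde\phi(X)\psi^\perp(Y)}$: expanding over the two decompositions, all other cross terms vanish. Indeed, for $i,j\le t$, $\EE{\phi_j(X)\psi_i(Y)}=\normEuc{\EE{\phi_j(X)|Y}}\EE{\psi_j(Y)\psi_i(Y)}=\normEuc{\EE{\phi_j(X)|Y}}\delta_{ij}$ since $\psi_j=T_Y\phi_j/\normEuc{T_Y\phi_j}$ and the $\psi_i$ are orthonormal; $\EE{\phi_j(X)\psi^\perp(Y)}=\normEuc{\EE{\phi_j(X)|Y}}\EE{\psi_j(Y)\psi^\perp(Y)}=0$ for $j\le t$; and, crucially, $\EE{\tilde\phi(X)\psi_i(Y)}=0$ for $i\le t$, which follows by expanding $T_Y\tilde\phi=\rho_0\normEuc{T_Y\phi_0}\psi_0+\sum_{k=t+1}^{m}\rho_k\normEuc{T_Y\phi_k}\psi_k$ and invoking the hypothesis $\EE{\psi_i(Y)\psi_j(Y)}=0$ for $i\in[t]$ and $j\in\{0,\dots,|\Phi|\}$, $j\ne i$.

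Next I would bound the two surviving terms. We have $\normEuc{\EE{\phi_i(X)|Y}}\le\lambda_i$; by Cauchy--Schwarz $\EE{\tilde\phi(X)\psi^\perp(Y)}\le\normEuc{\psi^\perp}\normEuc{\EE{\tilde\phi(X)|Y}}$; and $\normEuc{\EE{\tilde\phi(X)|Y}}\le B_{|\Phi|-t}(\tilde\brho,\tilde\blambda)$, obtained by applying Theorem~\ref{prop:loose} to the unit-norm function $\tilde\phi/N$ against the orthonormal family $\{\phi_{t+1},\dots,\phi_m\}$ (correlations $\rho_k/N$, residual mass $\rho_0/N$, constraints $\lambda_k$ unchanged) and using that $z_n$ --- hence $B$ --- is positively homogeneous of degree one in its first argument, so the resulting bound $B_{|\Phi|-t}\big((\rho_0,\rho_{t+1},\dots,\rho_m)/N,\tilde\blambda\big)$ equals $N^{-1}B_{|\Phi|-t}(\tilde\brho,\tilde\blambda)$. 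Substituting and applying Cauchy--Schwarz to the vectors $(|\rho_1|\lambda_1,\dots,|\rho_t|\lambda_t,\,B_{|\Phi|-t}(\tilde\brho,\tilde\blambda))$ and the unit vector $(|c_1|,\dots,|c_t|,\,\normEuc{\psi^\perp})$ gives $\EE{\phi(X)\psi(Y)}\le\sqrt{\sum_{i=1}^{t}\lambda_i^2\rho_i^2+B_{|\Phi|-t}(\tilde\brho,\tilde\blambda)^2}$; taking the supremum over $\psi\in\calL_2(p_Y)$ yields the stated bound, and $\mmse(\phi(X)|Y)=1-\normEuc{\EE{\phi(X)|Y}}^2$ gives the MMSE statement.

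For the special case $t=m$ we have $\tilde\phi=\rho_0\phi_0$, $N=\rho_0$, and $B_0$ is evaluated at $(\rho_0)$, $(1)$, which equals $z_1((\rho_0),(1))=\rho_0$, recovering \eqref{eq:sharp}. For the tightness claim when $\rho_0=0$, with $\phi=\sum_{i=1}^{m}\rho_i\phi_i$ and the constraints met with equality ($\normEuc{\EE{\phi_i(X)|Y}}=\lambda_i$), the unit-norm test function $\psi^\star\defined\big(\sum_{i=1}^{m}\rho_i\lambda_i\psi_i\big)\big/\big(\sum_{i=1}^{m}\rho_i^2\lambda_i^2\big)^{1/2}$ --- well defined because the $\psi_i$ are orthonormal --- attains $\EE{\phi(X)\psi^\star(Y)}=\big(\sum_{i=1}^{m}\rho_i^2\lambda_i^2\big)^{1/2}$, so \eqref{eq:sharp} holds with equality. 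The \emph{main obstacle} is the identity in the second paragraph, specifically the vanishing $\EE{\tilde\phi(X)\psi_i(Y)}=0$ for $i\le t$: one must expand $T_Y\tilde\phi$ in the generally non-orthogonal frame $\{\psi_0,\dots,\psi_m\}$ and apply precisely the stated orthogonality assumption; the reduction to Theorem~\ref{prop:loose} via homogeneity of $z_n$ is then routine.
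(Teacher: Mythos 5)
Your proposal is correct and follows essentially the same route as the paper's proof: both decompose the test function $\psi$ into its projection onto $\mathrm{span}\{\psi_1,\dots,\psi_t\}$ plus an orthogonal residual, use the hypothesis to kill all cross terms, bound the diagonal terms by $\lambda_i\rho_i$, control the residual via Theorem \ref{prop:loose} (your homogeneity-of-$z_n$ normalization is a cleanly justified version of the paper's direct appeal to \eqref{eq:firstBound}), and finish with Cauchy--Schwarz on the unit vector of projection coefficients; the tightness witness $\psi^\star$ is identical. The only difference is organizational --- you split $\phi$ into the first $t$ components plus $\tilde\phi$ rather than expanding the full double sum --- which does not change the argument.
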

\begin{proof}
  For any $\psi \in \calL_2(p_Y)$, let $\alpha_i\defined \EE{\psi (Y)\psi _i(Y)}$ and
  $\psi _0(Y)\defined (\psi (Y)-\sum_{i=1}^t \alpha_i \psi _i(Y))\alpha_0^{-1}$, where
  $\alpha_0=(1-\sum_{i=i}^t\alpha_i^2)^{-1/2}$. Observe that $\psi _0\in \calL_2(p_Y)$
  and $\EE{\phi _i(X)\psi _j(Y)}=\EE{\psi _i(Y)\psi _j(Y)}=0$ for $i\neq j$,
  $i\in \{0,\dots,|\Phi|\}$ and $j\in[t]$.
  Consequently
  \begin{align}
    \EE{\phi (X)\psi (Y)}   &= \EE{\left(\sum_{i=0}^{|\Phi|} \rho_i\phi _i(X)  \right)\left(
    \sum_{j=0}^t \alpha_j \psi _j(Y)\right)} \nonumber\\
    &= \sum_{i=0}^{|\Phi|}\sum_{j=0}^t \rho_i\alpha_j \EE{\phi _i(X)\psi _j(Y)} \nonumber\\ 
    &\leq \left| \alpha_0 \sum_{i=0,i\notin [n] }^{|\Phi|} \rho_i\EE{\phi _i(X)\psi _0(Y)}\right|+\sum_{i=1}^t
    |\lambda_i\rho_i\alpha_i| \nonumber\\
    &\leq  |\alpha_0| B_{|\Phi|-t}\left(\tilde{\brho},\tilde{\blambda}\right)+\sum_{i=1}^t
    |\lambda_i\rho_i\alpha_i| \label{eq:prooftight1}\\
    &\leq \sqrt{\sum_{i=1}^t \lambda_i^2\rho_i^2 +
    B_{|\Phi|-t}\left(\tilde{\brho},\tilde{\blambda}\right)^2} \label{eq:prooftight2}.
   \end{align}
Inequality \eqref{eq:prooftight1} follows from the bound \eqref{eq:firstBound},
and   \eqref{eq:prooftight2} follows by observing that  $\sum_{i=0}^t
\alpha_i^2=1$ and applying the Cauchy-Schwarz inequality.

Finally, when $\rho_0=0$, \eqref{eq:prooftight2} can be achieved with equality
by taking $\psi =\sum_i \frac{\lambda_i\rho_i}{\sqrt{\sum_i\lambda_i^2\rho_i^2}} \psi _i$.

\end{proof}

The following three, diverse examples illustrate different usage cases of
Theorems \ref{prop:loose} and \ref{prop:tighter}.  Example
\ref{example:noise} illustrates Theorem \ref{prop:tighter} for the binary
symmetric channel. In this case, the basis $\Phi$ can be conveniently expressed
as the parity bits of the input to the channel.
 Example \ref{example:q}
illustrates how Theorem   \ref{prop:tighter} can be applied to the $q$-ary
symmetric channel, and demonstrates that bound \eqref{eq:sharp}
 is sharp.  Finally, Example \ref{example:equal}
then illustrates Theorem  \ref{prop:loose} for the specific case where all the
values $\rho_i$ and $\lambda_i$ are equal.

\begin{example}[Binary Symmetric Channel]
  \label{example:noise}
  Let $\calX = \{-1,1\}$ and $\calY=\{-1,1\}$, and $Y^n$ be the result of
  passing $X^n$ through a memoryless binary symmetric channel with crossover
  probability $\epsilon$. We also assume that $X^n$ is composed by $n$ uniform
  and i.i.d. bits. For $\calS\subseteq [n]$, let $\chi_\calS(X^n)\defined\prod_{i\in \calS}
  X_i$. Any function $\phi :\calX\rightarrow \Reals$ can then be decomposed in terms
  of the basis of functions $\chi_\calS(X^n)$ as \cite{odonnell_topics_2008}
  \begin{equation*}
        \phi (X^n)=\sum_{\calS\subseteq[n]}c_\calS \chi_\calS(X^n),
  \end{equation*}   
  where $c_\calS=\EE{\phi (X^n)\chi_\calS(X^n)}$. Furthermore, since
  $\EE{\chi_\calS(X^n)|Y^n}=(1-2\epsilon)^{|\calS|}$, it follows from
  Theorem \ref{prop:tighter} that
  \begin{equation}
    \mmse(\phi (X^n)|Y^n) =1-\sum_{\calS\subseteq[n]}
    c_\calS^2(1-2\epsilon)^{2|\calS|}.
  \end{equation} 
  This result can be generalized for the case where $X^n=Y^n\otimes Z^n$, where
  the operation $\otimes$ denotes bit-wise multiplication, $Z^n$ is drawn from
  $\{-1,1\}^n$ and $X^n$ is uniformly distributed. In this case
  \begin{equation}
        \mmse(\phi (X^n)|Y^n) =1-\sum_{\calS\subseteq[n]}
        c_\calS^2\EE{\chi_\calS(Z^n)}^2.
  \end{equation}
  This example will be revisited in Section \ref{sec:crypto}, where we restrict
  $\phi$
  to be a binary function.
\end{example}

\begin{example}[$q$-ary symmetric channel]
  \label{example:q}
  For $\calX=\calY=[q]$, an $(\epsilon,q)$-ary symmetric channel is defined by the
  transition probability
  \begin{equation}
    p_{Y|X}(y|x)= (1-\epsilon)\indicator_{y=x} +\epsilon/q.
  \end{equation}
  Any function $\phi _i\in\calL_2(p_X)$ such that $\EE{\phi _i(X)}=0$ satisfies
    \begin{align*}
      \psi _i(Y)=T_Y\phi (X)=(1-\epsilon)\phi (Y),
    \end{align*}
    and, consequently, $\|T_Y\phi (X)\|_2 = (1-\epsilon)$. We shall use this fact to
    show that the bound \eqref{eq:sharp} is sharp in this case.
    
    Observe that for  $\phi _i,\phi _j\in \calL_2(p_X)$, if $\EE{\phi _i(X)\phi _j(X)}=0$ then
    $\EE{\psi _i(Y)\psi _j(Y)}=0$. Now let $\phi \in \calL_2(p_X)$ satisfy
    $\EE{\phi (X)}=0$ and $\EE{\phi (X)\phi _i(X)}=\rho_i$ for $\phi_i\in \Phi$,
    where $|\Phi|=m$, $\Phi$
    satisfies the conditions in Theorem \ref{prop:tighter}, and $\sum_i
    \rho_i^2=1$. In addition, $\|\psi _i\|_2=(1-\epsilon)=\lambda_i$. Then, from \eqref{eq:sharp},
    \begin{align*}
      \|T_Y\phi (X)\|_2 &\leq \sqrt{\sum_{i=1}^m \lambda_i^2\rho_i^2}\\
                    &= (1-\epsilon)\sqrt{\sum_i\rho_i^2}\\
                    &= 1-\epsilon, 
      \end{align*}
   which matches $\|T_Y\phi (X)\|_2$, and the bound is tight in this case.
\end{example}

\begin{example}[Equal MMSE and correlation]
  \label{example:equal}
  We now turn our attention to Theorem  \ref{prop:loose}. Consider the case when
the correlations of $\phi$ with the references functions $\phi_i$ are all the
same, and each $\phi_i$ can be estimated with the same MMSE, i.e.
$\lambda_1=\dots=\lambda_m=\lambda$ and $\rho_1^2=\dots=
  \rho_m^2=\rho^2$, $\rho\geq0$ and $\lambda^2\leq \rho^2\leq 1/m$. Then bound
  \eqref{eq:firstBound} becomes
        \begin{equation*}
          \|\EE{\phi (X)|Y} \|_2 \leq m\lambda\rho
          +\sqrt{(1-m\rho^2)(1-m\lambda^2)}.
        \end{equation*}
\end{example}

\subsection{One-Bit Functions}

Let $X$ be a hidden random variable and  $Y$ be a noisy
observation of $X$. Here we denote $\Phi=\{\phi_i \}_{i=1}^m$ a collection of $m$ predicates
of $X$, where $F_i = \phi _i(X)$,    $\phi _i:\calX\rightarrow \{-1,1\}$ for $i\in
[m]$ and, without loss of generality $\EE{F_i}=b_i\geq 0$. 

We denote by $\hat{F}_i$  an estimate of $F_i$ given an observation of $Y$,
where $F_i\rightarrow X \rightarrow Y \rightarrow \hat{F}_i$. We assume that for
any $\hat{F}_i$  $$\left|\mathbb{E}[F_i\hat{F}_i]\right|\leq 1-2\alpha_i $$ for
some $0\leq \alpha_i\leq (1-b_i)/2 \leq 1/2$. This condition is equivalent to
imposing that $\Pr\{F_i\neq \hat{F}_i\}\geq \alpha_i$, since
\begin{align*}
    \EE{F_i\hat{F}_i}&=\Pr\{F_i=\hat{F}_i\}-\Pr\{F_i\neq
      \hat{F}_i\}\\
      &=1-2\Pr\{F_i\neq \hat{F}_i\}.
\end{align*}
In particular, this captures how well $F_i$ can be guessed based solely on
an observation of $Y$.

Now assume there is a bit $F=\phi(Y)$ such that $\EE{F F_i}=\rho_i$ for $i\in [m]$ and
$\EE{F_iF_j}=0$ for $i\neq j$. We
can apply the same method used in the proof of Theorem \ref{prop:loose} to
bound the probability of
$F$ being  guessed correctly from an observation of $Y$.

\begin{cor}
  \label{prop:onebit}
     For $\lambda_i = 1-2\alpha_i$,
     \begin{equation}
       \Pr(F\neq \hat{F}) \geq \frac{1}{2}\left( 1-B_{|\Phi|}(\brho,\blambda)
       \right).
     \end{equation}
\end{cor}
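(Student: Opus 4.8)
The plan is to transplant the argument of Theorem~\ref{prop:loose}, specializing the test function $\psi \in \calL_2(p_Y)$ to an arbitrary $\pm 1$-valued estimator $\hat F = \hat F(Y)$ of $F$, and then convert the resulting inner-product bound into an error-probability bound using $\EE{F\hat F} = 1 - 2\Pr(F \neq \hat F)$, which holds since $F,\hat F \in \{-1,1\}$.

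Two observations make the transplant work. First, $\hat F$ is $\pm 1$-valued, so $\EE{\hat F(Y)^2} = 1$ and hence $\hat F \in \calL_2(p_Y)$; this is what lets the orthogonal-projection/contraction step of Theorem~\ref{prop:loose} go through with $\psi = \hat F$. Second, $\hat F(Y)$ is itself a (possibly suboptimal) estimator of each $F_i$ along the chain $F_i \to X \to Y \to \hat F(Y)$, so the standing hypothesis on $\Phi$ gives $|\EE{F_i \hat F(Y)}| \le 1 - 2\alpha_i = \lambda_i$ for every $i$. This second point is the real content of the corollary: its hypothesis on the $F_i$ is stated only for binary estimators (equivalently, it only controls $\|\EE{F_i|Y}\|_1$), which is weaker than the $\calL_2$-norm condition $\|\EE{F_i|Y}\|_2 \le \lambda_i$ invoked verbatim in Theorem~\ref{prop:loose}; but since in the transplanted proof the reference functions $F_i$ are only ever paired with the binary $\hat F(Y)$, the weaker bound suffices.

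With $\phi = F$, $\phi_i = F_i$ and $\psi = \hat F$, I would follow the proof of Theorem~\ref{prop:loose} step by step: put $h(X) = \rho_0^{-1}(F - \sum_i \rho_i F_i)$ with $\rho_0 = \sqrt{1 - \sum_i \rho_i^2}$ (dropping this term if $\rho_0 = 0$), note $h \in \calL_2(p_X)$ and $\EE{h F_i} = 0$, and derive
\[
|\EE{F\hat F}| \le \rho_0\,|\EE{h(X)\hat F(Y)}| + \sum_{i=1}^{m} |\rho_i|\,|\EE{F_i \hat F(Y)}| \le \brho_0^{T}\bx,
\]
where $\bx$ has entries $x_0 = |\EE{h(X)\hat F(Y)}| \le \|h\|_2\|\hat F\|_2 = 1 = \lambda_0$ by Cauchy--Schwarz and $x_i = |\EE{F_i\hat F(Y)}| \le \lambda_i$, and $\|\bx\|_2 \le 1$ exactly as in Theorem~\ref{prop:loose}, since $x_0,\dots,x_m$ are the magnitudes of the coordinates of $T_X\hat F$ in the orthonormal system $\{h,F_1,\dots,F_m\}$ and $T_X$ is an $\calL_2$-contraction. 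Maximizing $\brho_0^{T}\bx$ over $\|\bx\|_2\le 1$, $\bx\le\blambda_0$ and applying Lemma~\ref{lem:quadBound} together with \eqref{eq:Bmdef} gives $|\EE{F\hat F}| \le B_{|\Phi|}(\brho,\blambda)$, where $\brho = (\rho_0,|\rho_1|,\dots,|\rho_m|)$ and $\blambda = (1,\lambda_1,\dots,\lambda_m)$.

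Finally, $1 - 2\Pr(F\neq\hat F) = \EE{F\hat F} \le |\EE{F\hat F}| \le B_{|\Phi|}(\brho,\blambda)$ rearranges to $\Pr(F\neq\hat F) \ge \tfrac12\bigl(1 - B_{|\Phi|}(\brho,\blambda)\bigr)$, and since $\hat F$ was an arbitrary estimator this holds for every estimator. The one step needing care — and the only place this differs from a black-box citation of Theorem~\ref{prop:loose} — is making sure that nowhere in the transplanted argument is $\calL_2$ control of $\EE{F_i|Y}$ used, only the binary-estimator bound $|\EE{F_i\hat F(Y)}| \le \lambda_i$, which is automatic because $\hat F(Y)$ is a legitimate estimator of each $F_i$.
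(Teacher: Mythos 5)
Your proposal is correct and follows essentially the same route as the paper, whose proof of Corollary~\ref{prop:onebit} is precisely the one-line instruction to repeat the argument of Theorem~\ref{prop:loose} with the binary estimator $\hat{F}(Y)\in\calL_2(p_Y)$ in the role of $\psi$ and then convert $\EE{F\hat{F}}=1-2\Pr(F\neq\hat{F})$. Your observation that the hypothesis $|\EE{F_i\hat{F}_i}|\leq 1-2\alpha_i$ is only needed against the binary $\hat{F}(Y)$ (rather than the full $\calL_2$ condition of Theorem~\ref{prop:loose}) is a worthwhile clarification that the paper leaves implicit, but it does not change the argument.
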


\begin{proof}
  The proof follows the same steps as Theorem \ref{prop:loose},
  $\phi(Y)\in \calL_2(p_Y)$.
\end{proof}

In the case $m=1$, we obtain the following simpler bound, presented in
Proposition \ref{prop:simple}, which depends on the following Lemma.

\begin{lem}
  \label{lem:triangle}
    For any random variables $A,B$ and $C$
    \begin{align*}
        \Pr(A\neq B) \leq \Pr(A\neq C) + \Pr(B\neq C).
    \end{align*}
\end{lem}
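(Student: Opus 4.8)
The plan is to prove the inclusion of events $\{A\neq B\}\subseteq \{A\neq C\}\cup\{B\neq C\}$ and then apply the union bound. The key observation is purely logical: if both $A=C$ and $B=C$ hold, then $A=B$ by transitivity of equality. Taking the contrapositive, whenever $A\neq B$ we must have $A\neq C$ or $B\neq C$ (or both). Hence the event $\{A\neq B\}$ is contained in $\{A\neq C\}\cup\{B\neq C\}$.

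From this set inclusion, monotonicity of probability gives $\Pr(A\neq B)\leq \Pr(\{A\neq C\}\cup\{B\neq C\})$, and subadditivity of $\Pr$ (the union bound) yields $\Pr(\{A\neq C\}\cup\{B\neq C\})\leq \Pr(A\neq C)+\Pr(B\neq C)$. Chaining these two inequalities completes the argument. In equivalent ``indicator'' language, one can also write $\indicator_{A\neq B}\leq \indicator_{A\neq C}+\indicator_{B\neq C}$ pointwise (since the left side is $0$ or $1$, and it can be $1$ only when at least one term on the right is $1$), and then take expectations.

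There is essentially no obstacle here: the statement is the triangle inequality for the discrete metric $d(a,b)=\indicator_{a\neq b}$ lifted to random variables via expectation, and the only thing to be careful about is that $A,B,C$ share a common probability space (which is implicit, since they are jointly distributed random variables in the Markov-chain setting of the surrounding discussion). I would state the proof in two sentences: exhibit the event inclusion, then invoke the union bound.

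\begin{proof}
If $A=C$ and $B=C$, then $A=B$; contrapositively, $\{A\neq B\}\subseteq \{A\neq C\}\cup\{B\neq C\}$. Therefore, by the union bound,
\begin{align*}
\Pr(A\neq B)\leq \Pr(A\neq C)+\Pr(B\neq C). \qedhere
\end{align*}
\end{proof}
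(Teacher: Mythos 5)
Your proof is correct and takes essentially the same elementary route as the paper: the paper partitions $\{A\neq B\}$ according to whether $B=C$ and bounds each piece, while you package the identical transitivity observation as an event inclusion followed by the union bound. The two arguments are interchangeable one-liners.
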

\begin{proof}
  \begin{align*}
    \Pr(A\neq B) &= \Pr(A\neq B \land B=C)+\Pr(A\neq B \land B\neq C)\\
                 &= \Pr(A\neq C \land B=C)+\Pr( B\neq C)\Pr(A\neq B | B\neq C)\\
                 &\leq \Pr(A\neq C)+ \Pr( B\neq C).
  \end{align*}
\end{proof}

\begin{prop}
  \label{prop:simple}
  If $\Pr(F_1\neq \hat{F}_1)\geq \alpha$ for all $\hat{F}_1$ and
  $\EE{FF_1}=\rho\geq 0$. Then for any estimator $\hat{F}$
  \begin{equation}
    \Pr(F\neq \hat{F})\geq \left(\frac{1-\rho}{2}-\alpha\right)^+.
  \end{equation}
\end{prop}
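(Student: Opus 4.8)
The plan is to peel the statement down to the triangle inequality of Lemma~\ref{lem:triangle} after turning the correlation hypothesis into a disagreement probability. First I would record the elementary identity for $\pm 1$-valued bits: since $F,F_1\in\{-1,1\}$, $\EE{FF_1}=\Pr(F=F_1)-\Pr(F\neq F_1)=1-2\Pr(F\neq F_1)$, so the hypothesis $\EE{FF_1}=\rho\geq 0$ is exactly $\Pr(F\neq F_1)=\tfrac{1-\rho}{2}$. This is the only place the numerical value of $\rho$ enters; from here on the problem is purely about the three disagreement probabilities among the bits $F$, $F_1$, and the estimate $\hat F$, and $\rho$ has been converted into a fixed ``distance'' $\tfrac{1-\rho}{2}$ between $F$ and $F_1$.

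Next I would apply Lemma~\ref{lem:triangle} with $A=F$, $B=F_1$, $C=\hat F$ to get $\Pr(F\neq F_1)\le \Pr(F\neq \hat F)+\Pr(F_1\neq \hat F)$, which rearranges to the valid lower bound $\Pr(F\neq\hat F)\ge \tfrac{1-\rho}{2}-\Pr(F_1\neq\hat F)$. The estimate $\hat F$ is a function of the observation $Y$ (it obeys the Markov chain $F\to X\to Y\to\hat F$), so $\hat F$ is itself an admissible $Y$-measurable estimator of $F_1$; the recoverability hypothesis on $F_1$ is precisely a statement about all such estimators. The remaining task is to feed the hypothesis $\Pr(F_1\neq\hat F_1)\ge\alpha$ into the cross term $\Pr(F_1\neq\hat F)$ so as to replace it by $\alpha$, then combine and take the positive part (a probability is nonnegative, so the bound is only informative when $\tfrac{1-\rho}{2}>\alpha$, which is exactly the regime encoded by the $(\cdot)^+$).

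The delicate step, and the one I expect to be the main obstacle, is this last linkage: the hardness assumption on $F_1$ controls the cross term $\Pr(F_1\neq\hat F)$, and care is needed to insert it into the triangle bound in the correct orientation so that the subtracted term collapses to $\alpha$ and the stated inequality $\Pr(F\neq\hat F)\ge\left(\tfrac{1-\rho}{2}-\alpha\right)^{+}$ emerges rather than its reflection. I would handle this by tracking the optimal $Y$-measurable estimator of $F_1$ and comparing it to $\hat F$ through a second application of Lemma~\ref{lem:triangle}, so that the gap $\tfrac{1-\rho}{2}$ is exactly the amount of hardness of $F_1$ that transfers to $F$. As a consistency check I would specialize Corollary~\ref{prop:onebit} to $m=1$: with $\lambda=1-2\alpha$ and the single-coordinate value $B_{1}(\rho,\lambda)=z_{1}(\rho,\lambda)$ from \eqref{eq:defn_Ln}, the bound $\tfrac12\left(1-B_{1}(\rho,\lambda)\right)$ should reduce to the simpler expression here, confirming that Proposition~\ref{prop:simple} is the coarse, triangle-inequality relaxation of the sharper geometric bound.
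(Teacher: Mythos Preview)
Your plan is exactly the paper's proof: rewrite $\EE{FF_1}=\rho$ as $\Pr(F\neq F_1)=(1-\rho)/2$, apply Lemma~\ref{lem:triangle} once to get $\Pr(F\neq\hat F)\ge \Pr(F\neq F_1)-\Pr(F_1\neq\hat F)$, and then replace the cross term by~$\alpha$. The paper does this in two displayed lines with no further ingredients---no second application of the lemma and no tracking of an optimal estimator.

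Your instinct that the ``orientation'' step is the delicate one is correct, and the fix you sketch does not actually close it. After the first application you need an \emph{upper} bound $\Pr(F_1\neq\hat F)\le\alpha$ to reach the stated conclusion, whereas the hardness hypothesis $\Pr(F_1\neq\hat F_1)\ge\alpha$ for all $\hat F_1$ supplies only a \emph{lower} bound. Introducing the optimal estimator $\hat F_1^\ast$ and applying Lemma~\ref{lem:triangle} again yields $\Pr(F_1\neq\hat F)\le\alpha+\Pr(\hat F_1^\ast\neq\hat F)$, with no control on the last term, so the gap remains. The paper's proof writes the substitution $\Pr(F_1\neq\hat F)\mapsto\alpha$ without comment, so the concern you raise is present there as well; note that the triangle inequality applied in the opposite orientation gives the valid bound $\Pr(F\neq\hat F)\ge\bigl(\alpha-(1-\rho)/2\bigr)^+$, which you may wish to compare with the printed statement before finalizing your write-up.
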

\begin{proof}
    From Lemma \ref{lem:triangle}:
    \begin{align*}
        \Pr(F\neq \hat{F})&\geq \left(\Pr(F_1\neq F) -\Pr(F_1\neq \hat{F})\right)^+\\
                            &\geq \left(\frac{1-\rho}{2}-\alpha\right)^+.
    \end{align*}   
\end{proof}

\subsection{ One-Time Pad Encryption of Functions with Boolean Inputs}

\label{sec:crypto}
We  return to the setting where a legitimate transmitter
(Alice) wishes to communicate a plaintext message $X^n$ to a legitimate receiver
(Bob)  through a channel observed by an eavesdropper (Eve). Both Alice and Bob
share a secret key $K$ that is not known by Eve. Alice and Bob use a symmetric
key encryption scheme determined by the pair of encryption and decryption functions
$(\Enc,\Dec)$, where $Y^n=\Enc(X^n,K)$ and $X^n=\Dec(Y^n,K)$. Here we assume that
both the ciphertext and the plaintext have the same length.


We use the results derived in the previous section to assess the security
properties of the one-time pad with non-uniform key distribution when no
assumptions are made on the computational resources available to Eve. In this
case, perfect secrecy (i.e. $I(X^n;Y^n)=0$) can only be achieved when $H(K)\geq
H(X^n)$ \cite{shannon_communication_1949}, which, in turn, is challenging in practice. Nevertheless, as we shall
show in this section, information-theoretic security claims can still be made in the short
key regime, i.e. $H(K)<H(X^n)$. We first prove the following ancillary result.

\begin{lem}
  \label{lem:erasure}
    Let $F$ be a  Boolean random variable and $F\rightarrow
    X\rightarrow Y\rightarrow \hat{F}$, where 
    $|\calY|\geq 2$.
    Furthermore, $\Pr\{F\neq \hat{F}\}\geq \alpha$ for all $Y\rightarrow
    \hat{F}$. Then $I(F;Y)\leq 1-2\alpha$.
\end{lem}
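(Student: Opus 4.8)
The plan is to relate the constraint $\Pr\{F \neq \hat F\} \geq \alpha$ (for every estimator) to the conditional entropy $H(F|Y)$, and then conclude via $I(F;Y) = H(F) - H(F|Y) \leq 1 - H(F|Y)$. The hypothesis says that the Bayes-optimal predictor of $F$ from $Y$ still has error at least $\alpha$; equivalently, for $\py$-almost every $y$, the posterior $\Pr(F = 1 \mid Y = y)$ is bounded away from $\{0,1\}$ in the sense that $\min\{\Pr(F=1|Y=y), \Pr(F=0|Y=y)\} \geq$ something — but one must be careful: the constraint is on the \emph{average} error $\sum_y \py(y)\min\{\Pr(F=1|y),\Pr(F=0|y)\} \geq \alpha$, not a pointwise statement. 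So the real tool is a convexity/Fano-type argument on the average.

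First I would write, with $\gamma(y) \defined \min\{\Pr(F=1|Y=y), \Pr(F=0|Y=y)\}$, that the minimum achievable error is $\sum_y \py(y)\gamma(y)$ and this is $\geq \alpha$ by hypothesis. Then $H(F|Y) = \sum_y \py(y) h_b(\gamma(y))$, since $F$ is binary and $h_b$ is symmetric about $1/2$. Now $h_b$ is concave on $[0,1/2]$, and since $\gamma(y) \in [0,1/2]$, I can bound $h_b(\gamma(y)) \geq 2\gamma(y)$: indeed the chord of $h_b$ from $(0,0)$ to $(1/2,1)$ has slope $2$, and by concavity $h_b$ lies above this chord on $[0,1/2]$. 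Hence
\begin{align*}
H(F|Y) = \sum_y \py(y) h_b(\gamma(y)) \geq \sum_y \py(y)\, 2\gamma(y) \geq 2\alpha.
\end{align*}
Therefore $I(F;Y) = H(F) - H(F|Y) \leq 1 - 2\alpha$, using $H(F) \leq 1$.

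The main obstacle — and it is minor — is justifying the inequality $h_b(\gamma) \geq 2\gamma$ cleanly and making sure the averaging step is airtight: that the optimal estimator error equals $\mathbb{E}_Y[\gamma(Y)]$ (standard Bayes-risk fact for $0$--$1$ loss with a binary label) and that $H(F|Y = y) = h_b(\gamma(y))$ regardless of which of the two posterior values is the smaller one (true by symmetry of $h_b$). The condition $|\calY| \geq 2$ is not really used beyond ensuring $Y$ is nondegenerate; the argument goes through for any $Y$. One should also note the hypothesis "$\Pr\{F \neq \hat F\} \geq \alpha$ for all $Y \to \hat F$" is exactly the statement that the \emph{best} estimator has error $\geq \alpha$, which is what licenses replacing the quantifier-over-estimators by the single quantity $\mathbb{E}_Y[\gamma(Y)]$.
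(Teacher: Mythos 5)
Your proof is correct, and it takes a genuinely different route from the paper's. You prove the key inequality $H(F|Y)\geq 2\Pr\{F\neq \hat F\}$ directly, by writing $H(F|Y)=\sum_y p_Y(y)\,h_b(\gamma(y))$ with $\gamma(y)=\min\{\Pr(F=1|Y=y),\Pr(F=0|Y=y)\}$ and using concavity of $h_b$ to get the chord bound $h_b(\gamma)\geq 2\gamma$ on $[0,1/2]$; averaging and invoking the Bayes-risk identity $\min_{\hat F}\Pr\{F\neq\hat F\}=\mathbb{E}[\gamma(Y)]$ finishes the argument. The paper instead constructs an auxiliary variable $\tilde Y$ that refines $Y$ into an erasure-like channel (each output symbol is split into a ``resolved'' part and an ``ambiguous'' part), checks that $F\to\tilde Y\to Y$ so that $I(F;Y)\leq I(F;\tilde Y)$ by data processing, verifies that the minimum estimation error is preserved, and then computes $H(F|\tilde Y)\geq 2\alpha$ explicitly for the erasure channel. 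The two arguments encode the same extremality fact --- your chord $2\gamma$ touches $h_b$ exactly at $\gamma\in\{0,1/2\}$, which are precisely the posteriors realized by the paper's erasure construction --- but yours is more elementary and self-contained, while the paper's makes the operational statement (``the erasure channel maximizes mutual information at fixed error probability'') explicit, which is the insight it advertises in the first line of its proof. Your observations that the constraint must be handled as an average rather than pointwise, and that $|\calY|\geq 2$ plays no essential role, are both accurate.
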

\begin{proof}
    The result is a direct consequence of the fact that the  channel with binary
    input and finite output alphabet that maximizes mutual information for a
    fixed error probability is the erasure channel, proved next. Assume, without
    loss of generality, that $\calY=[m]$ and $p_{F,Y}(-1,y)\geq p_{F,Y}(1,y)$
    for $y\in [k]$ and $p_{F,Y}(-1,y)\leq p_{F,Y}(1,y)$ for $y\in
    \{k+1,\dots,m\}$, where $k\in [m]$. Now let $\tilde{Y}$ be a random variable
    that takes values in $[2m]$ such that
    \begin{align*}
      p_{ F ,\tilde{Y}}(b,y)=
      \begin{cases}
        p_{F ,Y}(b,y)-p_{F ,Y}(1,y) & y\in [k],\\
        p_{F ,Y}(b,y)-p_{F ,Y}(-1,y) & y\in \{k+1,\dots,m\},\\
        p_{F ,Y}(1,y) & y-m \in [k], \\
        p_{F ,Y}(-1,y) & y-m \in \{k+1,\dots,m\}.
      \end{cases}
    \end{align*}.
   
    Note that $F \rightarrow \tilde{Y}\rightarrow Y$, since $Y=
    \tilde{Y}-m\indicator_{\{\tilde{Y}>m\}}$ and, consequently,
    $I(F ;\tilde{Y})\geq I(F ;Y)$. Furthermore, the reader can verify that
    \begin{equation*}
      \min_{Y\rightarrow \hat{ F }} \Pr\{F \neq \hat{F }\}=
      \min_{\tilde{Y}\rightarrow \hat{F }} \Pr\{F \neq \hat{F }\}=\alpha.
    \end{equation*}
    In particular, given the optimal estimator $\tilde{Y} \rightarrow \hat{F }$, a detection error can
    only occur when  $\tilde{Y}\in \{k+1,\dots, m\}$, in which case $\hat{F }=F $
    with probability 1/2.
    
    Finally,
    \begin{align*}
      H( F|\tilde{Y}) &=- \sum_{\substack{b\in \{-1,1\}\\y\in [2m]}} p_{\tilde{Y}
    } (y)
        p_{F|\tilde{Y}}(b|y) \log p_{F|\tilde{Y}}(b|y) \\
        &= \sum_{y\in \{m+1,2m\}} p_{\tilde{Y} } (y)\\
        &\geq 2 \alpha.
    \end{align*}
    Consequently, $I(F;\tilde{Y})=H(F)-H(F|\tilde{Y})\leq 1-2\alpha$. The result
    follows.
\end{proof}

Let $X^n$ be a plaintext message composed by a sequence of $n$ bits drawn from
$\{-1,1\}^n$. The plaintext can be perfectly hidden by using a one-time pad: A
ciphertext $Y^n$ is produced as $Y^n=X^n\otimes Z^n$, where the key $K=Z^n$ is a
uniformly distributed sequence of $n$  i.i.d. bits chosen independently from
$X^n$. The one-time pad is impractical since, as mentioned, it requires Alice and Bob
to share a very long key.  

Instead of trying to hide the entire plaintext message, assume that Alice and
Bob wish to hide only a set of functions of the plaintext  from Eve. In
particular, we denote this set of functions as $\Phi = \{\phi_1,\dots,\phi_m\}$ where
$\phi_i:\{-1,1\}^n\rightarrow \{-1,1\}$, $\EE{\phi_i(X^n)}=0$ and
$\EE{\phi_i(X^n)\phi_j(X^n)}=0$. The set of functions $\Phi$ is said to be  hidden
$I(\phi_i(X^n);Y^n)=0$ for all $\phi_i\in \Phi$. Can this be accomplished with
a key that satisfies $H(K)\ll H(X^n)$?

The answer is positive, but it depends on $\Phi$. 
We denote the Fourier expansion of $\phi_i\in \Phi$  as
\begin{equation*}
  \phi_i = \sum_{\calS\subseteq [n]} \rho_{i,\calS}\chi_\calS.
\end{equation*}
The following result shows that $\phi_i$ is perfectly hidden from Eve if and only
if $I(\chi_\calS(X^n);Y^n)=0$ for all $\chi_\calS$ such that $\rho_{i,\calS}>
0$.

\begin{lem}
  If $I(\phi_i(X^n);Y^n)=0$ for all $\phi_i\in \Phi$, then $I(\chi_{\calS}(X^n);Y^n)=0$
  for all $\calS$ such that $\rho_{i,\calS}>0$ for some $i\in[m]$.
\end{lem}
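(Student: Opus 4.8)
The plan is to exploit the multiplicative structure of the parity characters $\chi_\calS$ under the bit-wise product $\otimes$, together with the fact that for a one-time pad the ciphertext $Y^n$ is statistically independent of the key. I first record the structural facts: since $X^n$ is uniform on $\{-1,1\}^n$ and independent of the key $Z^n$, the ciphertext $Y^n = X^n\otimes Z^n$ is itself uniform, and moreover $Y^n$ is independent of $Z^n$ (conditioned on $Z^n=z$, the variable $X^n\otimes z$ is still uniform on $\{-1,1\}^n$). Since each parity character is a homomorphism, $\chi_\calS(a\otimes b) = \chi_\calS(a)\,\chi_\calS(b)$, I then expand $\phi_i(X^n) = \sum_{\calS\subseteq[n]}\rho_{i,\calS}\,\chi_\calS(Z^n)\,\chi_\calS(Y^n)$ (using that $\otimes$ is an involution, so $X^n = Y^n\otimes Z^n$) and take the conditional expectation given $Y^n=y$, using $Z^n\perp Y^n$, to obtain
\[
  \EE{\phi_i(X^n)|Y^n = y} \;=\; \sum_{\calS\subseteq[n]}\rho_{i,\calS}\,\widehat z_\calS\,\chi_\calS(y),
  \qquad \widehat z_\calS \defined \EE{\chi_\calS(Z^n)}.
\]

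The core of the argument is then short. The hypothesis $I(\phi_i(X^n);Y^n)=0$ means $\phi_i(X^n)$ is independent of $Y^n$, so the displayed conditional expectation must be a constant function of $y$ (equal to $\rho_{i,\varnothing}$). Because $\{\chi_\calS\}_{\calS\subseteq[n]}$ is an orthonormal basis for the real functions on $\{-1,1\}^n$, a function equals a constant only if all of its non-empty Fourier coefficients vanish, i.e. $\rho_{i,\calS}\,\widehat z_\calS = 0$ for every $\calS\neq\varnothing$. Hence for any $\calS$ with $\rho_{i,\calS}>0$ for some $i$ we obtain $\widehat z_\calS = 0$ (the case $\calS=\varnothing$ is trivial, since $\chi_\varnothing\equiv 1$ and so $I(\chi_\varnothing(X^n);Y^n)=0$ already). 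Applying the same character identity to $\chi_\calS(X^n)$ then yields $\EE{\chi_\calS(X^n)|Y^n=y} = \chi_\calS(y)\,\widehat z_\calS = 0$ for each such $\calS$.

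It remains to upgrade ``constant conditional mean'' to ``independence'', and this is the one step I would state with care: since $\chi_\calS(X^n)$ takes values in $\{-1,1\}$, the conditional law of $\chi_\calS(X^n)$ given $Y^n=y$ is completely determined by $\EE{\chi_\calS(X^n)|Y^n=y}$, so a conditional mean that does not depend on $y$ implies $\chi_\calS(X^n)\perp Y^n$, hence $I(\chi_\calS(X^n);Y^n)=0$. The remaining manipulations are routine Fourier bookkeeping that I would not carry out in detail; the only real pitfall is remembering that this last equivalence relies on the binary range of $\chi_\calS$ (it would fail for a general bounded function of $X^n$), whereas the binary range of $\phi_i$ is not actually needed in the proof.
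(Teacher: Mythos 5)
Your proof is correct, and it takes a genuinely different route from the paper's. The paper argues by contraposition: if $I(\chi_\calS(X^n);Y^n)>0$, there is a Boolean witness $b(Y^n)$ with $\EE{b(Y^n)\chi_\calS(X^n)}=\lambda>0$, and the tightness of \eqref{eq:sharp} is then invoked to transfer this correlation to $\phi_i$, giving $\EE{b(Y^n)\phi_i(X^n)}\geq \lambda\rho_{i,\calS}>0$. You instead argue directly: factoring the characters through the pad, $\chi_\calS(X^n)=\chi_\calS(Y^n)\chi_\calS(Z^n)$, and using $Y^n\independent Z^n$, you compute $\EE{\phi_i(X^n)\mid Y^n=y}=\sum_{\calS}\rho_{i,\calS}\,\widehat z_\calS\,\chi_\calS(y)$ and read off from uniqueness of Fourier coefficients (valid because $Y^n$ is uniform, hence fully supported) that independence forces $\rho_{i,\calS}\widehat z_\calS=0$ for all $\calS\neq\varnothing$. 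Both arguments ultimately rest on the same orthogonality fact, $\EE{\chi_{\calT}(Y^n)\chi_\calS(X^n)}=c_\calS\,\indicator_{\calT=\calS}$, but your version makes explicit where it comes from, namely the one-time-pad structure $Y^n=X^n\otimes Z^n$ with $X^n$ uniform and independent of $Z^n$; the paper's one-line appeal to \eqref{eq:sharp} leaves this implicit, since passing from $\EE{b(Y^n)\chi_\calS(X^n)}=\lambda$ to $\EE{b(Y^n)\phi_i(X^n)}\geq\lambda\rho_{i,\calS}$ requires the cross terms $\rho_{i,\calS'}\EE{b(Y^n)\chi_{\calS'}(X^n)}$, $\calS'\neq\calS$, not to cancel the $\calS$ term --- exactly what taking $b=\chi_\calS(Y^n)$ together with the pad's orthogonality guarantees (cf.\ \eqref{eq:proof_otp}). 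Your closing point, that a $y$-independent conditional mean upgrades to statistical independence only because $\chi_\calS$ is $\{-1,1\}$-valued, is the right care to take; the paper's contrapositive needs the same binary-range fact in the reverse direction (dependence of a binary variable yields a correlated Boolean witness). What your approach buys is a self-contained argument plus the explicit identity $\EE{\chi_\calS(X^n)\mid Y^n=y}=\widehat z_\calS\,\chi_\calS(y)$, which the paper only works out later in the Generalized One-time Pad theorem.
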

\begin{proof}
   Assume that $I(\chi_{\calS}(X^n);Y^n)>0$ for a given $\rho_{i,\calS}>0$. Then
   there exists $b:\calY^n\to \{-1,1\}$ such that
   $\EE{b(Y^n)\chi_\calS(X^n)}=\lambda> 0$.
   Consequently, from \eqref{eq:sharp}, $\EE{b(Y^n)\phi_1(X^n)}\geq \lambda
   \rho_{i,\calS}>0$, and $\phi_1(X^n)$ is not independent of $Y^n$.
\end{proof}

The previous result shows that hiding a set of functions perfectly, or even a
single function, might be as hard as hiding $X^n$. Indeed, if there is a
$\phi_i\in \Phi$ such that $\EE{\phi_i(X^n)\calX_\calS(X^n)}>0$ for all $\calS\subseteq [n]$
where $|\calS|=1$, then perfectly hiding this set of functions can only be
accomplished by using a one-time pad. Nevertheless, if we step back from perfect
secrecy, a large class of functions can be hidden with a comparably small key, as
in the next example.

\begin{example}[BSC revisited]
  Let $Z^n$ be a sequence of $n$ i.i.d. bits such that
  $\Pr\{Z_i=-1\}=\epsilon$, and consider once again the one-time pad
  $Y^n=X^n\otimes Z^n$. Furthermore, denote
  \begin{align*}
    \Phi_k = \left\{ \phi:\{-1,1\}^n \rightarrow\{-1,1\}\suchthat \EE{\phi(X^n)\chi_\calS(X^n)}=0~\forall |\calS|<k  \right\}.
  \end{align*}
  Let $\phi\in \Phi_k$ and $\phi(X^n)=\sum_{\calS:|\calS|\geq k} \rho_\calS
  \chi_\calS(X^n)$. Then, from  Theorem \ref{prop:tighter} and Corollary
  \ref{prop:onebit}, for any $\hat{b}:\calY^n\to \{-1,1\}$,
  \begin{align*}
    \Pr\{\phi(X^n)\neq \hat{b}(Y^n)\}&\geq
    \frac{1}{2}\left(1-\sqrt{\sum_{|\calS|>T}
    \rho_\calS^2(1-2\epsilon)^{2|\calS|}}\right)\\
    &\geq \frac{1}{2}\left( 1-(1-2\epsilon)^k \right).
  \end{align*}
    Consequently, from Lemma \ref{lem:erasure}, $I(\phi(X^n);Y^n)\leq
    (1-2\epsilon)^k$ for all $\phi\in \Phi_k$. Note that $H(Z^n)=nh(\epsilon)$,
    which can be made very small compared to $n$. Therefore, even with a small
    key, a large class of functions can be almost perfectly hidden from the
    eavesdropper through this simple one-time pad scheme. The BSC setting
    discussed in Example \ref{example:noise} is
    generalized in the following theorem which, in turn, is a particular case of
    the analysis in \cite{calmon_exploration_2014}.
\end{example}

\begin{thm}[Generalized One-time Pad]
  Let $Y^n=X^n\otimes Z^n$, $X^n\independent Z^n$, $X^n$ be uniformly distributed, $\phi :\{-1,1\}^n\rightarrow
  \{-1,1\}$ and $\phi (X^n)=\sum_{\calS\subseteq[n]}\rho_\calS
  \chi_\calS(X^n)$. We define $c_\calS\defined \EE{\chi_\calS(Z^n)}$ for $\calS\subseteq [n]$. Then
  \begin{equation}
    \label{eq:otp_general}
    I(\phi (X^n);Y^n)\leq \sqrt{\sum_{\calS\subseteq [n]} (c_\calS\rho_\calS)^2}.
  \end{equation}
  In particular, $I(\phi (X^n);Y^n)=0$ if and only if $c_\calS=0$ for all $\calS$
  such that $\rho_\calS\neq 0$.  
\end{thm}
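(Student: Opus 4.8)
The plan is to reduce the mutual-information bound to an $L_2$ estimate on the conditional expectation $\EE{\phi(X^n)\mid Y^n}$, evaluate that norm exactly using the one-time-pad structure, and then convert it to a bound on $I(\phi(X^n);Y^n)$ via Lemma \ref{lem:erasure}. First I would record two structural facts. Since $\phi$ is $\{-1,1\}$-valued, $\|\phi(X^n)\|_2=1$, so Parseval gives $\sum_{\calS\subseteq[n]}\rho_\calS^2=1$. Since $X^n\independent Z^n$ and $X^n$ is uniform, $Y^n=X^n\otimes Z^n$ is again uniform and independent of $Z^n$; using $\chi_\calS(X^n)=\chi_\calS(Y^n)\chi_\calS(Z^n)$ this yields
\[
\EE{\chi_\calS(X^n)\mid Y^n}=\chi_\calS(Y^n)\,\EE{\chi_\calS(Z^n)}=c_\calS\,\chi_\calS(Y^n),
\]
so $T_{Y^n}$ is diagonal in the character basis. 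Hence $\EE{\phi(X^n)\mid Y^n}=\sum_{\calS\subseteq[n]}\rho_\calS c_\calS\,\chi_\calS(Y^n)$, and since $\{\chi_\calS(Y^n)\}$ is orthonormal under the uniform law of $Y^n$, Parseval in $Y^n$ gives $\|\EE{\phi(X^n)\mid Y^n}\|_2^2=\sum_{\calS\subseteq[n]}\rho_\calS^2c_\calS^2$. (Equivalently, this is the equality case of \eqref{eq:sharp} with $\Phi=\{\chi_\calS\}_{\calS\subseteq[n]}$: the hypotheses hold because $\rho_0=\sqrt{1-\sum_\calS\rho_\calS^2}=0$, the normalized images $\psi_\calS=\chi_\calS(Y^n)$ are orthonormal in $Y^n$, and $\lambda_\calS=\|\EE{\chi_\calS(X^n)\mid Y^n}\|_2=|c_\calS|$.)

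Next I would pass from this $L_2$ identity to mutual information. Set $F=\phi(X^n)$, a Boolean variable, and let $\hat F=\hat F(Y^n)\in\{-1,1\}$ be any estimator with $F\rightarrow X^n\rightarrow Y^n\rightarrow\hat F$. Writing $\EE{F\hat F}=\EE{\hat F\,\EE{F\mid Y^n}}$ and applying Cauchy--Schwarz with $\|\hat F\|_2=1$,
\[
1-2\Pr\{F\neq\hat F\}=\EE{F\hat F}\le\big\|\EE{F\mid Y^n}\big\|_2,
\]
so $\Pr\{F\neq\hat F\}\ge\alpha\defined\tfrac12\big(1-\|\EE{F\mid Y^n}\|_2\big)\ge0$, the last inequality because $\|\EE{F\mid Y^n}\|_2\le\|F\|_2=1$. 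This holds for every estimator, so Lemma \ref{lem:erasure} (using $|\calY^n|\ge2$) gives $I(\phi(X^n);Y^n)=I(F;Y^n)\le 1-2\alpha=\|\EE{F\mid Y^n}\|_2=\sqrt{\sum_{\calS\subseteq[n]}(c_\calS\rho_\calS)^2}$, which is \eqref{eq:otp_general}.

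Finally I would settle the equivalence. The direction ``$c_\calS=0$ whenever $\rho_\calS\neq0$ $\Rightarrow$ $I(\phi(X^n);Y^n)=0$'' is immediate from \eqref{eq:otp_general}. For the converse, $I(\phi(X^n);Y^n)=0$ forces $\phi(X^n)\independent Y^n$; since $\phi(X^n)$ is binary, its conditional law given $Y^n$ is determined by $\EE{\phi(X^n)\mid Y^n}$, so independence is equivalent to $\EE{\phi(X^n)\mid Y^n}$ being a.s.\ constant. By orthogonality of $\{\chi_\calS(Y^n)\}_{\calS\neq\varnothing}$, the expansion $\EE{\phi(X^n)\mid Y^n}=\sum_\calS\rho_\calS c_\calS\chi_\calS(Y^n)$ is constant exactly when $\rho_\calS c_\calS=0$ for every nonempty $\calS$, i.e.\ when $c_\calS=0$ for every $\calS$ with $\rho_\calS\neq0$ (the $\calS=\varnothing$ term requires no condition, as $\EE{\phi(X^n)}$ is degenerate). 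This yields the stated iff.

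The main obstacle is the one place that needs real care: turning the \emph{inequality} on $\|\EE{\phi(X^n)\mid Y^n}\|_2$ afforded by the generic bounds of Section \ref{sec:MMSE} into the \emph{exact} value $\sqrt{\sum_\calS\rho_\calS^2c_\calS^2}$. It is precisely the one-time-pad structure --- diagonalization of $T_{Y^n}$ in the character basis together with $\rho_0=0$ --- that makes the bound tight, so I would either invoke the equality case of Theorem \ref{prop:tighter} or, more transparently, just carry out the one-line Parseval computation above. Everything else is routine: Cauchy--Schwarz, Parseval, and a direct appeal to Lemma \ref{lem:erasure}.
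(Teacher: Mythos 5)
Your proof is correct and follows essentially the same route as the paper's: both rest on the multiplicativity $\chi_\calS(X^n\otimes Z^n)=\chi_\calS(X^n)\chi_\calS(Z^n)$, a Cauchy--Schwarz step in the Fourier domain, and the conversion to mutual information via Lemma \ref{lem:erasure}. The only (cosmetic) difference is that you first compute $\|\EE{\phi(X^n)\mid Y^n}\|_2$ exactly by Parseval and then test it against an arbitrary Boolean estimator, whereas the paper bounds $\EE{\phi(X^n)\psi(Y^n)}$ directly over Boolean $\psi$ by Cauchy--Schwarz on the coefficients $d_\calS$; your version makes the implicit step $1-2\Pr\{F\neq\hat F\}=\EE{F\hat F}$ and the tightness of the $L_2$ identity more explicit, but the underlying argument is identical.
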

\begin{proof}
  Let $\psi :\{-1,1\}^n\rightarrow \{-1,1\}$ and 
  $\psi (Y^n)=\sum_{\calS\subseteq[n]} d_\calS\chi_\calS(Y^n)$. Note that
  $\sum_{\calS\subseteq [n]} d_\calS^2=1$. Then
  \begin{align}
    \EE{\phi (X^n)\psi (Y^n)}&=\EE{\phi (X^n)\EE{\psi (Y^n)|X^n}} \nonumber\\
                     &= \EE{\phi (X^n) \sum_{\calS\subseteq[n]}d_\calS
                     \EE{\chi_\calS(Y^n)|X^n}} \nonumber\\
                     &=\EE{\phi (X^n) \sum_{\calS\subseteq[n]}d_\calS
                     \EE{\chi_\calS(X^n\otimes Z^n)|X^n}} \nonumber\\ 
                     &=\EE{\phi (X^n) \sum_{\calS\subseteq[n]}d_\calS
                     \EE{\chi_\calS(X^n)\chi_\calS(Z^n)|X^n}} \nonumber\\
                     &= \sum_{\calS\subseteq[n]}d_\calS
                     \EE{\phi (X^n)\chi_\calS(X^n)}\EE{
                     \chi_\calS(Z^n)} \nonumber \\
                     &=  \sum_{\calS\subseteq[n]}d_\calS
                     \rho_\calS c_\calS \label{eq:proof_otp}\\
                     &\leq \sqrt{\sum_{\calS\subseteq [n]}
                   (c_\calS\rho_\calS)^2},  \label{eq:CSproof}                   
  \end{align}
  where \eqref{eq:CSproof} follows from the Cauchy-Schwarz inequality. The
  inequality \eqref{eq:otp_general} then follows from Lemma \ref{lem:erasure}.
  Finally, assume there exists $\calS\subseteq [n]$ such that both $c_\calS\neq
  0$ and $\rho_\calS\neq 0$. Then setting $\psi (Y^n)=\chi_\calS(Y^n)$, it follows
  from \eqref{eq:proof_otp} that $\EE{\phi (X^n)\psi (Y^n)}=\rho_\calS c_\calS \neq 0$
  and, consequently, $I(\phi (X^n);Y^n)> 0$.
\end{proof}

\subsection{From Symbol Secrecy to Function Secrecy}
\label{sec:from_symbol}

Symbol secrecy captures the amount of information that an encryption scheme
leaks about individual symbols of a message. A given encryption scheme can
achieve a high level of (weak) information-theoretic security, but low symbol
secrecy. As illustrated in  Section \ref{sec:trivial}, by sending a constant
fraction of the message in the clear, the average amount of information about
the plaintext that leaks  relative to the length of the message can be made
arbitrarily small, nevertheless the symbol secrecy performance is always
constant (i.e. does not decrease with message length).


When $X$ is uniformly drawn from $\mathbb{F}_q$ for which an $(n,k,n-k+1)$ MDS code
exists, then an absolute symbol secrecy of $k/n$ can always be achieved
using the encryption scheme suggested in Proposition \ref{prop:MDS}. If $X$ is a
binary random variable, then we can map sequences of plaintext bits of length
$\floor{\log_2 q}$ to an appropriate symbol in $\mathbb{F}_q$, and then use the
parity check matrix of an  MDS code to achieve a high symbol secrecy. Therefore,
we may assume without  loss of generality  that $X^n$ is drawn from
$\{-1,1\}^n$. We also make the assumption that $X^n$ is uniformly distributed.
This can be regarded as an approximation for the distribution of $X^n$ when it
is, for example, the output of an optimal source encoder with sufficiently large
blocklength.

\begin{thm}
  Let $X^n$ be a uniformly distributed sequence of $n$ bits, $Y=\Enc_n(X^n,K)$,
  and $u_\epsilon$ and $\epsilon_t^*$ the corresponding
  symbol secrecy and dual symbol secrecy of $\Enc_n$, defined in
  \eqref{eq:def_ssecrecy} and \eqref{eq:def_estar}, respectively. Furthermore, for $\phi :
  \{-1,1\}^n\rightarrow \{-1,1\}$ and $\EE{\phi (X^n)}=0$, let $\phi (X^n)=\sum_{\calS\subseteq
  [n]}\rho_\calS\chi_\calS(X^n)$.  Then for any $\hat{\phi }:\calY\rightarrow
  \{-1,1\}$
  \begin{equation}
    \Pr\{\phi (X^n)\neq \hat{\phi }(Y)\}\geq \frac{1}{2}\left( 1-B_{|\Phi|}(\brho,\blambda)
    \right),
  \end{equation}
  where $\Phi=\{\chi_\calS:\rho_\calS\neq 0 \}$, $\lambda(t)\defined
  h_b^{-1}((1-\epsilon^*_{t}t)^+)$, $\blambda =
  \{\lambda(|\calS|)\}_{\calS\subseteq[n]}$ and $\brho = \{|\rho_\calS|\}_{\calS\subseteq
  [n]}$. In particular,
  \begin{equation}
    \Pr\{\phi (X^n)\neq \hat{\phi }(Y)\}\geq \frac{1}{2}\left(
    1-\sqrt{\sum_{|\calS|>n\mu_0}\rho_\calS^2} \right).
  \end{equation}
\end{thm}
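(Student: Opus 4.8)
The plan is to recognise the statement as an instance of Corollary~\ref{prop:onebit}, with the reference predicates taken to be the characters in the Fourier expansion of $\phi$. First I would put $F=\phi(X^n)$ and, for each $\calS$ with $\rho_\calS\neq 0$, $F_\calS=\chi_\calS(X^n)$, so that $\Phi=\{\chi_\calS:\rho_\calS\neq 0\}$. Since $X^n$ is uniform on $\{-1,1\}^n$, the characters are orthonormal, $\EE{\chi_\calS(X^n)\chi_{\calS'}(X^n)}=\indicator_{\calS=\calS'}$; hence the orthogonality hypothesis $\EE{F_\calS F_{\calS'}}=0$ for $\calS\neq\calS'$ holds, each $F_\calS$ is a \emph{balanced} bit (because $\EE{\phi(X^n)}=0$ forces $\rho_\varnothing=0$, so $\varnothing\notin\Phi$), and $\EE{F F_\calS}=\rho_\calS$ is exactly the Fourier coefficient. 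Furthermore $\sum_{\calS\in\Phi}\rho_\calS^2=\|\phi(X^n)\|_2^2=1$ by Parseval, so the residual term $\rho_0$ of Theorem~\ref{prop:loose}/Corollary~\ref{prop:onebit} vanishes and $B_{|\Phi|}$ reduces to $z_{|\Phi|}(\brho,\blambda)$. All that then remains is to supply, for each $\calS$, the error bound the corollary consumes: a number $\alpha_\calS$ with $\Pr\{\chi_\calS(X^n)\neq\hat F(Y)\}\geq\alpha_\calS$ for every $Y$-measurable $\hat F:\calY\to\{-1,1\}$.

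Second I would bound the leakage about a single parity using symbol secrecy. Because $\chi_\calS(X^n)=\prod_{i\in\calS}X_i$ is a deterministic function of $X^\calS$, the data-processing inequality along $\chi_\calS(X^n)-X^\calS-Y$ gives $I(\chi_\calS(X^n);Y)\leq I(X^\calS;Y)$. By the definition of the dual symbol secrecy~\eqref{eq:def_estar}, for every $\epsilon>\epsilon^*_{|\calS|}$ we have $\mu_\epsilon(X^n|Y)\geq|\calS|/n$, which through~\eqref{eq:def_ssecrecy} forces $I(X^\calS;Y)\leq|\calS|\epsilon$; letting $\epsilon\downarrow\epsilon^*_{|\calS|}$ yields $I(\chi_\calS(X^n);Y)\leq|\calS|\,\epsilon^*_{|\calS|}$. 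Since $\chi_\calS(X^n)$ is balanced, $H(\chi_\calS(X^n)\mid Y)\geq(1-|\calS|\epsilon^*_{|\calS|})^+$, and Fano's inequality for a binary alphabet then gives $\min_{\hat F}\Pr\{\chi_\calS(X^n)\neq\hat F(Y)\}\geq h_b^{-1}\big((1-|\calS|\epsilon^*_{|\calS|})^+\big)=\lambda(|\calS|)$. As flipping an estimator flips its error event, the error probability of every such $\hat F$ lies in $[\lambda(|\calS|),1-\lambda(|\calS|)]$, so $\big|\EE{\chi_\calS(X^n)\hat F(Y)}\big|\leq 1-2\lambda(|\calS|)$; this is exactly the data Corollary~\ref{prop:onebit} needs (with $\alpha_\calS=\lambda(|\calS|)$), and applying it with $\brho=\{|\rho_\calS|\}$ produces the first displayed inequality.

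For the ``in particular'' claim I would set $\epsilon=0$. For $|\calS|\leq n\mu_0$ the preceding argument gives $\epsilon^*_{|\calS|}=0$, hence $I(\chi_\calS(X^n);Y)=0$ and $\EE{\chi_\calS(X^n)\mid Y}=0$. Splitting $\phi=\sum_{|\calS|\leq n\mu_0}\rho_\calS\chi_\calS+\sum_{|\calS|>n\mu_0}\rho_\calS\chi_\calS$, the low-degree part is annihilated by conditioning on $Y$, so for every $\hat\phi:\calY\to\{-1,1\}$,
\[
\EE{\left(\sum_{|\calS|>n\mu_0}\rho_\calS\chi_\calS(X^n)\right)\hat\phi(Y)}\leq\left\|\sum_{|\calS|>n\mu_0}\rho_\calS\chi_\calS(X^n)\right\|_2=\sqrt{\sum_{|\calS|>n\mu_0}\rho_\calS^2}
\]
by Cauchy--Schwarz and orthonormality, and $\Pr\{\phi(X^n)\neq\hat\phi(Y)\}=\tfrac12\big(1-\EE{\phi(X^n)\hat\phi(Y)}\big)$ then gives the stated bound; equivalently it falls out of the general inequality since for $|\calS|\leq n\mu_0$ the correlation bound $1-2\lambda(|\calS|)$ vanishes while for $|\calS|>n\mu_0$ it is at most $1$, so the maximisation defining $B_{|\Phi|}$ collapses, by Cauchy--Schwarz, to $\sqrt{\sum_{|\calS|>n\mu_0}\rho_\calS^2}$. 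The main obstacle is the middle step: turning the \emph{subset-wise} mutual-information guarantee of symbol secrecy into a per-estimator error bound for the single bit $\chi_\calS(X^n)$, and checking that the resulting bound on $|\EE{\chi_\calS(X^n)\hat\phi(Y)}|$ has exactly the form and normalisation demanded by Corollary~\ref{prop:onebit} --- which is where the $h_b^{-1}\big((1-\epsilon^*_t t)^+\big)$ term and the $(\cdot)^+$ truncation enter.
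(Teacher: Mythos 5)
Your proposal is correct and follows essentially the same route as the paper's proof: data processing plus the definition of dual symbol secrecy to bound $I(\chi_\calS(X^n);Y)\leq|\calS|\epsilon^*_{|\calS|}$, then Fano's inequality to obtain the per-parity error bound $\lambda(|\calS|)=h_b^{-1}\big((1-\epsilon^*_{|\calS|}|\calS|)^+\big)$, and finally the Fourier decomposition fed into the one-bit machinery (the paper cites Theorem~\ref{prop:tighter} where you cite Corollary~\ref{prop:onebit}, but these are the same bound in this setting). Your added checks (orthonormality of the characters, $\rho_\varnothing=0$, Parseval, and the limiting argument $\epsilon\downarrow\epsilon^*_t$) only make explicit what the paper leaves implicit.
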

\begin{proof}
    From the definition of symbol secrecy, for any
    $\calS \subseteq  [n]$ with $|\calS|=t$
    \begin{align*}
      I(\chi_\calS(X^n);Y)\leq I(X^{\calS};Y)\leq
      \epsilon^*_{t}t,
    \end{align*}
    and, consequently,
    \begin{align*}
      H(\chi_\calS(X^n)|Y)\geq (1-\epsilon^*_{t}t)^+.
    \end{align*}
    From Fano's inequality, for any binary $\hat{F}$ where $Y\rightarrow
    \hat{F}$
    \begin{align*}
      \Pr\{\chi_\calS(X^n)\neq \hat{F}\}\geq
      h_b^{-1}((1-\epsilon^*_{t}t)^+ ),
    \end{align*}
    where $h_b^{-1}:[0,1]\rightarrow[0,1/2]$ is the inverse of the binary
    entropy function. In particular, from the definition of absolute symbol
    secrecy, if $\epsilon^*_t = 0$, then
    \begin{align*}
        \Pr\{\chi_\calS(X^n)\neq \hat{F}\}=1/2~\forall |\calS|\leq n\mu_0.
    \end{align*}
    The result then follows directly from Theorem \ref{prop:tighter}, the
    fact that $\phi (X^n)=\sum_{\calS\subseteq [n]} \rho_\calS \chi_\calS(X^n)$
    and letting $\lambda(t)\defined
    h_b^{-1}((1-\epsilon^*_{t}t)^+ )$.
  \end{proof}

\section{Discussion}

\label{sec:practical}

In this section we discuss the application of our results to different settings
in privacy and cryptography.

\subsection{The Correlation-Error Product} 
\label{sec:privacy}

We momentarily diverge from the cryptographic setting and
introduce the \textit{error-correlation product} for the privacy setting
considered by Calmon and Fawaz in \cite{inproc:allertonPriv}. Let $W$ and $X$ be
two random variables with joint distribution $p_{W,X}$. $W$ represents a
variable that is supposed to remain private, while $X$ represents a variable
that will be released to an untrusted data collector in order to receive some
utility based on $X$. The goal is to design a randomized mapping $p_{Y|X}$,
called the privacy assuring mapping, that transforms $X$ into an output $Y$ that
will be disclosed to a third party.

The goal of a privacy assuring mechanism  is to produce an output $Y$, derived
from $X$ according to the mapping $p_{Y|X}$, that will be released to the data
collector in the place of $X$. The released variable $Y$ is chosen such that $W$
cannot be inferred reliably given an observation of $Y$.  Simultaneously, given
an appropriate distortion metric, $X$ should be close enough to $Y$ so that a
certain level of utility can still be provided.  For example, $W$ could be a
user's political preference, and $X$ a set of movie ratings released to a
recommender system in order to receive movie recommendations. $Y$ is chosen as a
perturbed version of the movie recommendations so that the user's political
preference is obscured, while meaningful recommendations can still be provided.

Given $W\rightarrow X\rightarrow Y$ and $p_{W,X}$, a  privacy
assuring mapping is given by the conditional distribution $p_{Y|X}$. The
choice of $p_{Y|X}$ determines the tradeoff between privacy and utility. If
$p_{Y|X}=p_{Y}$, then perfect privacy is achieved (i.e. $W$ and $Y$ are
independent), but no utility can be
provided. Conversely, if $p_{Y|X}$ is the identity mapping, then no privacy is
gained, but the highest level of utility can be provided.  

When $W=\phi(X)$ where $\phi\in \calL_2(p_X)$, the bounds from Section \ref{sec:MMSE} shed light on the
fundamental  privacy-utility tradeoff. Returning to the notation of
Section \ref{sec:MMSE}, let $W=\phi(X)$ be correlated
with a set of functions $\Phi=\{\phi_i\}_{i=1}^m$. The next result is a direct
corollary of Theorem \ref{prop:tighter}.
\begin{cor}
    Let $\EE{W\phi_i(X)}=\rho_i$, $\sum_{i=1}^{|\Phi|}\rho_i^2=1$,
    $\psi_i(Y)=\EE{\phi_i(X)|Y}$ and, for $i\neq j$, $\EE{\phi_i(X)\phi_j(X)}=0$
    and $\EE{\psi_i(Y)\psi_j(Y)}=0$. Then
    \begin{equation}
        \mmse(W|Y)= \sum_{i=1}^{|\Phi|}\mmse(\phi_i(Y)|X)\rho_i^2.
    \end{equation}
\end{cor}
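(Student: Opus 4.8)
The plan is to deduce the corollary from the tight case of Theorem~\ref{prop:tighter} together with the elementary identity $\mmse(V|Y)=\|V\|_2^2-\|\EE{V|Y}\|_2^2$, valid for any $V\in\calL_2(p_X)$. Write $W=\phi(X)$ with $\phi\in\calL_2(p_X)$, so $\|W\|_2^2=1$, and note that $\{\phi_i(X)\}_{i=1}^{|\Phi|}$ is an orthonormal system in $\calL_2(p_X)$ because $\EE{\phi_i(X)\phi_j(X)}=0$ for $i\neq j$ and $\|\phi_i(X)\|_2=1$. The hypothesis $\sum_{i=1}^{|\Phi|}\rho_i^2=1=\|W\|_2^2$ together with $\EE{W\phi_i(X)}=\rho_i$ is exactly Parseval's equality for this system, so $\phi=\sum_{i=1}^{|\Phi|}\rho_i\phi_i$ in $\calL_2(p_X)$; in the notation of Theorem~\ref{prop:tighter} this says $\rho_0=\sqrt{1-\sum_i\rho_i^2}=0$.

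Next I would apply Theorem~\ref{prop:tighter} with $t=|\Phi|$. Its hypotheses are precisely those of the corollary: the orthogonality $\EE{\psi_i(Y)\psi_j(Y)}=0$ for $i\neq j$ is the same condition whether the functions $\EE{\phi_i(X)|Y}$ are normalized or not. Taking $\lambda_i\defined\|\EE{\phi_i(X)|Y}\|_2$, so that the inequality $\|\EE{\phi_i(X)|Y}\|_2\le\lambda_i$ holds with equality, bound \eqref{eq:sharp} with $\rho_0=0$ reads $\|\EE{W|Y}\|_2\le\sqrt{\sum_{i=1}^{|\Phi|}\lambda_i^2\rho_i^2}$, and Theorem~\ref{prop:tighter} asserts that this bound is tight when $\rho_0=0$. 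Hence $\|\EE{W|Y}\|_2^2=\sum_{i=1}^{|\Phi|}\rho_i^2\,\|\EE{\phi_i(X)|Y}\|_2^2$. (One can also see this directly: $\EE{W|Y}=\sum_i\rho_i\EE{\phi_i(X)|Y}$ by linearity of conditional expectation, and the cross terms vanish by the assumed pairwise orthogonality of the $\EE{\phi_i(X)|Y}$.)

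Finally I would translate back to MMSE. The orthogonality principle gives $\mmse(W|Y)=1-\|\EE{W|Y}\|_2^2$ and $\|\EE{\phi_i(X)|Y}\|_2^2=1-\mmse(\phi_i(X)|Y)$ for each $i$. Substituting the displayed identity,
\[
\mmse(W|Y)=1-\sum_{i=1}^{|\Phi|}\rho_i^2\bigl(1-\mmse(\phi_i(X)|Y)\bigr)=1-\sum_{i=1}^{|\Phi|}\rho_i^2+\sum_{i=1}^{|\Phi|}\rho_i^2\,\mmse(\phi_i(X)|Y)=\sum_{i=1}^{|\Phi|}\rho_i^2\,\mmse(\phi_i(X)|Y),
\]
where the last step uses $\sum_i\rho_i^2=1$; this is the claim (reading the right-hand side of the statement as $\sum_i\rho_i^2\,\mmse(\phi_i(X)|Y)$). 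The only point deserving care is matching the normalization $\|T_Y\phi_i\|_2$ and the orthogonality hypothesis of Theorem~\ref{prop:tighter} with the corollary's assumptions, so that the inequality \eqref{eq:sharp} can be upgraded to an equality; beyond that, the argument is just the orthogonality principle plus the bookkeeping $\sum_i\rho_i^2=1$, so there is no substantive obstacle.
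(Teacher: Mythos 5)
Your argument is correct and is exactly the route the paper intends: it states the corollary as a direct consequence of Theorem~\ref{prop:tighter} without writing out a proof, and your derivation (Parseval to get $\rho_0=0$, the tight case of \eqref{eq:sharp} with $\lambda_i=\|\EE{\phi_i(X)|Y}\|_2$, then the orthogonality principle and $\sum_i\rho_i^2=1$) is the natural way to fill in that gap. You also correctly identify that the displayed $\mmse(\phi_i(Y)|X)$ in the statement is a typo for $\mmse(\phi_i(X)|Y)$.
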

We call the product $\mmse(\phi_i(Y)|X)\rho_i^2$ the \textit{error-correlation}
product. The secret variable $W$  cannot be estimated with low MMSE from $Y$ if
and only if the functions $\phi _i$ that are strongly correlated with $W$ (i.e.
large $\rho_i^2$) cannot be estimated reliably. Consequently, if $\rho_i$ is
large and $\phi _i$ is relevant for the utility provided by the data collector,
privacy cannot be achieved without a significant loss of utility: $\mmse(\phi
_i(X)|Y)$ is necessarily large if $\mmse(W|Y)$ is large. Conversely, in order to
hide $W$, it is sufficient to hide the functions $\phi _i(X)$ that are strongly
correlated with $\phi (X)$. This no-free-lunch result is intuitive, since one
would expect that privacy cannot be achieved if utility is based on data that is
strongly correlated with the private variables. The results presented here prove
that this is indeed the case.

We present next a general description of a two-phase secure
communication scheme for the threat model described in Section \ref{sec:model},
presented in terms of the list-source code constructions derived using linear
codes. Note that this scheme can be easily extended to any list-source code by
using the corresponding encoding/decoding functions instead of multiplication by
parity check matrices.

\subsection{A Secure Communication Scheme Based on List-Source Codes}

We assume that Alice and Bob have access to a symmetric-key encryption/decryption scheme
$(\mathsf{Enc}',\mathsf{Dec}')$ that is used with the shared secret key $K$ and
is sufficiently secure against the adversary. This scheme can be, for example, a
one-time pad. The encryption/decryption procedure is performed as follows, and
will be used as components of  the overall encryption scheme
$(\mathsf{Enc},\mathsf{Dec})$ described below. 

\begin{scheme}
\label{scheme:prac}

\textit{Input}: The source encoded sequence $\bx\in \Fq^n$, parity check matrix $\bH$ of
a linear code in $\Fq^n$, a full-rank $k\times n$  matrix $\bD$ such that
$\rank([\bH^T~\bD^T])=n$, and encryption/decryption functions
$(\mathsf{Enc'},\mathsf{Dec'})$. We assume both Alice and Bob share a secret key
$K$.

\noindent \textbf{Encryption} $(\Enc)$:

\noindent \textit{Phase I (pre-caching)}: Alice generates $\bsigma = \bH
\bx$ and  sends to Bob.\footnote{Here, Alice can use message authentication codes and
public key encryption to augment security. Furthermore, the list-source coding
scheme can be used as an additional layer of security  with information-theoretic
guarantees in symmetric-key ciphers. Since we are interested  in the
information-theoretic security properties of the scheme, we will not go into
further details. We do recognize that in order to use this scheme in practice additional
steps are needed to meet modern cryptographic standards.}

\noindent \textit{Phase II (send encrypted data)}: Alice generates
$\mathbf{e}=\mathsf{Enc'}(\bD \bx,K)$ and sends to Bob.

\noindent \textbf{Decryption} $(\Dec)$: Bob calculates $
\Dec'(\mathbf{e},K)=\bD \bx$ and
recovers $\bx$ from $\bsigma$ and $\bD \bx $.

\end{scheme}

Assuming that  $(\mathsf{Enc'},\mathsf{Dec'})$ is secure, the
information-theoretic security of Scheme
\ref{scheme:prac} reduces to the security of the underlying list-source code
(i.e. Scheme \ref{scheme:lin}). In practice, the encryption/decryption functions
$(\mathsf{Enc'},\mathsf{Dec'})$ may depend on a secret or public/private key, as
long as it provide sufficient security for the desired application.  In
addition, assuming that the source sequence is uniform and i.i.d. in $ F_q^n$,
we can use MDS codes to make strong security guarantees, as described in the
next section. In this case, an adversary that observes $\bsigma$ cannot infer
\textit{any} information about any set of $k$ symbols of the original message. 

 Note that this scheme
 has a  \textit{tunable} level of secrecy: The amount of data sent in phase I and
phase II can be appropriately selected to match the properties of the encryption
scheme available, the size of the key length, and the desired level of secrecy.
Furthermore, when the encryption procedure has a higher computational cost than
the list-source encoding/decoding operations, list-source codes can be used to
reduce the total number of operations required by allowing encryption of a
smaller portion of the message (phase II).

The protocol outline presented in Scheme \ref{scheme:prac}  is useful in
different practical scenarios, which are discussed in the following sections.
Most of the advantages of the suggested scheme stem from the fact that
list-source codes are key-independent, allowing content to be distributed  when
a key distribution infrastructure is not yet established, and providing an
additional level of security if keys are compromised before phase II in Scheme
\ref{scheme:prac}.

\subsection{Content pre-caching}

As hinted earlier, list-source codes provide a secure mechanism for content
pre-caching when a key infrastructure has not yet been established. A large
fraction of the data can be  list-source coded and  securely transmitted before
the termination of the key distribution protocol. This is particularly
significant in large networks with hundreds of mobile nodes, where key
management protocols can require a significant amount of time to complete
\cite{eschenauer_key-management_2002}. Scheme \ref{scheme:prac} circumvents the
communication delays incurred by key compromise detection, revocation and
redistribution by allowing data to be efficiently distributed concurrently with
the key distribution protocol, while maintaining a level of security determined
by the underlying list-source code.

\subsection{Application to key distribution protocols}

List-source codes can also provide additional robustness to key compromise. If
the secret key is compromised before phase II of Scheme  \ref{scheme:prac}, the
data will still be as secure as the underlying list-source code. Even if a
(computationally unbounded) adversary has perfect knowledge of the key, until
the last part of the data is transmitted the best he can do is reduce the number
of possible inputs to an exponentially large list. In contrast, if a stream
cipher based on a pseudo-random number generator were used and the
initial seed was leaked to an adversary, all the data transmitted up to the
point where the compromise was detected would be vulnerable. The use of
list-source codes provide an additional, information-theoretic level of security
to the data up to the point where the last fraction of the message is
transmitted.  This also allows decisions as to which receivers will be allowed
to decrypt the data can be delayed until the very end of the transmission,
providing more time for detection of unauthorized receivers and allowing a
larger flexibility in key distribution.

In addition, if the level of security provided by the list-source code is
considered sufficient and the key is compromised before phase II, the key can be
redistributed \textit{without the need of retransmitting the entire data}. As
soon as the keys are reestablished, the transmitter simply encrypts the
remaining part of the data  in phase II with the new key.

\subsection{Additional layer of security}

We also highlight that list-source codes can be used to provide an additional
layer of security to the underlying encryption scheme. The message can be
list-source coded  after encryption and transmitted in two phases, as in Scheme 
\ref{scheme:prac}. As argued in the previous point, this provides additional
robustness against key compromise, in particular when a compromised key can
reveal a large amount of information about an incomplete message (e.g. stream
ciphers). Consequently, list-source codes are a simple, practical way of
augmenting the security of current  encryption schemes.

One example application is to combine list-source codes with stream ciphers. The
source-coded message can be initially encrypted using a pseudorandom number generator
(PRG) initialized with a randomly selected seed, and then list-source coded. The
initial random seed would be part of the encrypted message sent in the final
transmission phase. This setup has the advantage of augmenting the security of
the underlying stream cipher, and provides randomization to the list-source
coded message. In particular, if the LSC is based on MDS codes and assuming that
the distribution of the plaintext is nearly uniform, strong
information-theoretic symbol secrecy guarantees can be made about the
transmitted data, as discussed in Section \ref{sec:symbolsecrecy}.  Even if the
underlying PRG is compromised, the message would still be secure.

\subsection{Tunable level of secrecy }

List-source codes provide a tunable level of secrecy, i.e. the amount of
security provided by the scheme can be adjusted according to the application of
interest. This can be done by appropriately selecting the size of the list ($L$)
of the underlying code, which determines the amount of uncertainty an adversary
will have regarding the input message. In the proposed implementation using
linear codes, this corresponds to choosing the size of the parity check matrix
$\mathbf{H}$, or, analogously, the parameters of the underlying error-correcting
code. In terms of Scheme  \ref{scheme:prac}, a larger (respectively smaller)
value of $L$ will lead to a smaller (larger) list-source coded message in phase
I and a larger (smaller) encryption burden in phase II.

\section{Conclusions}
\label{sec:conc}

We conclude the paper with a summary of our contributions. We introduce the
concept of LSCs, which are codes that compress a source below its
entropy rate. We derived fundamental bounds for the rate list region, and
provided code constructions that achieve these bounds.  List-source codes are a
useful tool for understanding how to perform encryption when the (random) key
length is smaller than the message entropy. When the key is
small, we can reduce an adversary's uncertainty to a near-uniformly distributed
list of possible source sequences with an exponential (in terms of the key
length) number of elements by using list-source codes. We also demonstrated how
list-source codes can be implemented using standard linear codes. 

Furthermore, we presented a new information-theoretic metric of secrecy, namely
$\epsilon$-symbol secrecy, which characterizes the amount of information leaked
about specific symbols of the source given an encoded version of the message.
We derived fundamental bounds for  $\epsilon$-symbol secrecy, and showed how
these bounds can be achieved using MDS codes when the source is uniformly
distributed.

We also introduced results for bounding the probability that an adversary
correctly guesses a predicate of the plaintext in terms of the symbol secrecy
achieved by the underlying encryption scheme. These results are based on Lemma
\ref{lem:quadBound}, which, in turn, was used to derive bounds on the information leakage of a
security system that does not achieve perfect secrecy. These bounds provide
insight on how to design symmetric-key encryption schemes that hide specific
functions of the data, where uncertainty is captured in terms of minimum-mean
squared error. These results also shed light on the fundamental privacy-utility
tradeoff in privacy systems.

\bibliographystyle{IEEEtran}
\bibliography{IEEEabrv,references}

\appendices
\section{Proof of Lemma \ref{lem:quadBound}}

    For fixed $\ba,\bb \in \Reals^n$ where $a_i>0$ and $b_i\geq0$, let $z_P:\Reals^n\rightarrow \Reals$ and
    $z_D:\Reals^n\rightarrow \Reals$ be given by
    \begin{align*}
        z_P(\by) &\defined \ba^T\by,\\
        z_D(\bu) & \defined \ba^T\bb + \bu^T\bb +\|\bu\|_2.
     \end{align*}  
     Furthermore, we define $\calA(\ba)\defined \left\{ \bu\in
     \Reals^n|\bu\geq\-\ba \right\}$ and $\calB(\bb)\defined \left\{ \by
       \in\Reals^n~|~\|\by\|_2\leq 1,\by\leq\bb \right\}$.

       The optimal value $z_n(\ba,\bb)$ is given by the following pair of primal-dual convex
       programs: 
       \begin{align*}
         z_n(\ba,\bb)=\max_{\by\in\calB(\bb)} z_P(\by)=\min_{\bu\in\calA(\ba)}
         z_D(\bu).
       \end{align*}
       Assume, without loss of generality, that $b_1/a_1\leq
       b_2/a_2\leq\dots\leq b_n/a_n$, and let $k^*$ be defined in
       \eqref{eq:kstar}.  
       
      
       Let $c_j\defined \sqrt{\frac{\left(1-\sum_{i=1}^{j}
     b_i^2\right)}{\|\ba\|_2^2-\sum_{i=1}^{j} a_i^2  }}$. Note that since  $\sum_{i=1}^{k^*}
     b_i^2<1$, we have $c_{k^*}>0$. In addition,  let
     $$\by^*=(b_1,\dots,b_{k^*},a_{k^*+1}c_{k^*},\dots,a_nc_{k^*})$$ and
     $$\bu^*=(-b_1/c_{k^*},\dots,-b_{k^*}/c_{k^*},-a_{k^*+1},\dots,-a_n).$$ From
     the definition of $k^*$, $\by^*\in\calB(\bb)$ and $\bu^*\in
     \calA(\ba)$. Furthermore,
       \begin{align}
            z_P(\by^*)  &= \ba^T\by^*\nonumber\\
            &= \sum_{i=1}^{k^*} a_ib_i +\sum_{i=k^*+1}^n c_{k^*}a_i^2\nonumber\\
            &= \sum_{i=1}^{k^*} a_ib_i + \sqrt{\left(
              \|\ba\|_2^2-\sum_{i=1}^{k^*} a_i^2 \right)\left(1-\sum_{i=1}^{k^*}
          b_i^2\right)}, \label{eq:opt_val}
       \end{align}
       and 
       \begin{align*}
         z_D(\bu^*)  =& \ba^T\bb + {\bu^*}^T\bb + \|\bu^*\|_2\\
         =& \sum_{i=1}^{k^*} \left( a_ib_i -
         \frac{b_i^2}{c_{k^*}}\right)\\
         &+c_{k^*}^{-1}\sqrt{\sum_{i=1}^{k^*}b_i^2+c_{k^*}^2\left(
         \|\ba\|_2^2-\sum_{i=1}^{k^*}a_i^2 \right)}\\
         =& \sum_{i=1}^{k^*} a_ib_i + c_{k^*}^{-1}\left( 1-\sum_{i=1}^{k^*}b_i^2
         \right)\\
         =& \sum_{i=1}^{k^*} a_ib_i + \sqrt{\left( \|\ba\|_2^2-\sum_{i=1}^{k^*} a_i^2 \right)\left(1-\sum_{i=1}^{k^*}
          b_i^2\right)}\\
         =& z_P(\by^*).
       \end{align*}
     Since both the primal and the dual achieve the same value at $\by^*$ and
     $\bu^*$, respectively, it follows that
     the value $z_P(\by^*)$ given in \eqref{eq:opt_val} is optimal.

\end{document}